\newtheorem{prop}{Proposition}
\newtheorem{thm}{Theorem}
\newcommand{\Rset}{\mathbb{R}}
\newcommand{\Xset}{\mathbb{X}}
\newenvironment{keyword}{\par{\noindent\bf Keywords:}}
\begin{document}

\title{A  Robust Lot Sizing Problem with Ill-known Demands
\thanks{The second author of the paper 
was partially supported by 
Polish Committee for Scientific Research, grant
 N~N206~492938.}
}

\author{
Romain    Guillaume,\\
{\small \textit{Universit{\'e} de Toulouse-IRIT,}}\\
{\small \textit{5, All{\'e}es A. Machado,}}
{\small \textit{31058 Toulouse Cedex 1, France,}}\\
{\small \texttt{Romain.Guillaume@irit.fr}}
  \and
  Przemys{\l}aw Kobyla{\'n}ski, Pawe{\l} Zieli{\'n}ski\footnote{Corresponding author} \\
{\small \textit{Faculty of Fundamental Problems of Technology,}}\\
{\small \textit{Wroc{\l}aw University of Technology,}}\\
{\small \textit{Wybrze{\.z}e Wyspia{\'n}skiego 27,}}\\
{\small \textit{50-370 Wroc{\l}aw, Poland,}}\\
{\small \texttt{\{Przemyslaw.Kobylanski,Pawel.Zielinski\}@pwr.wroc.pl}}
}

\date{}

\maketitle

\begin{abstract}
The paper deals with a  lot sizing
 problem with ill-known demands
 modeled by fuzzy intervals whose membership functions are
 possibility distributions for the values of the uncertain demands.
  Optimization criteria, in the setting of possibility theory, that
  lead to choose robust production plans under fuzzy demands are given. 
  Some algorithms for determining optimal robust production plans
   with respect to the proposed criteria,  and for evaluating production plans are provided.
  Some computational experiments are presented.
\end{abstract}

\begin{keyword}
 Fuzzy Optimization,
Dynamic Lot Sizing,
 Uncertain Demand,
  Possibility Theory
\end{keyword}

\section{Introduction}

Nowadays, companies do not  compete as independent
entities but as a part of collaborative supply chains. 
Uncertainty in  demands creates a risk in a supply chain as
backordering, obsolete inventory due to the bullwhip effect~\cite{LPW97}.
To reduce this risk two different approaches exist that are considered here.
The first approach consists in
a collaboration between the customer and the supplier and the second
one consists in an integration
of uncertainty into a planning process.

The collaborative processes  mainly aim to reduce a risk in  a supply chain~\cite{GC09}.
This is done by
enforcing
a coordination in  a supply chain.
Two approaches can be applied: vertical and horizontal.
The vertical approach is a centralized decision making that synchronizes a supply chain 
(the most common
way to coordinate within companies). 
The horizontal one refers  to the collaborative
planning,   in which
a supply chain can be seen as a chain,
where actors are
independent entities~\cite{D09}.
The industrial collaborative planning 
 has been standardized for implementing a cooperation
between retailers and manufactures. This process is called \emph{Collaborative Planning,
Forecasting and Replenishment} (CPFR)~\cite{IC05}. More precisely, the
collaborative processes are usually characterized by a set of point-to-point (customer/supplier)
relationships with a partial information sharing~\cite{GC09,ASRS09}. 
In the  collaborative supply chain, a
procurement plan is built and propagated through a supply chain. Namely, 
the procurement plan is composed of three horizons:  freezing, flexible and free ones~\cite{GC09}. Quantities in the freezing horizon are crisp and can not be modified, quantities
 in the flexible horizon are intervals and can be modified under constraints 
imposed by a previous
procurement plan.  In the  free horizon quantities can be modified without constraints.
Another way to reduce a risk in a supply chain is to integrate
the uncertainty in a planning process. 
In the literature, 
three different sources of uncertainty are
distinguished  (see~\cite{PMPL09} for a review): \emph{demand}, \emph{process} 
and \emph{supply}.  
These uncertainties  are  due to  difficulties to access to available historical data allowing to 
determine a probability distribution.

In this paper, we focus on the collaborative supply chain (a supply chain,
 where actors are independents entities) 
 under uncertain demands. 
  In most companies today, especially in
aeronautic companies,  actors use the 
\emph{Manufacturing Resource Planning} (MRPII) to plan theirs production. 
MRPII is a
planning control process composed of three processes (the production process, 
the procurement process and the distribution process)
and three levels~\cite{ACC11}: the strategic level (computing   commercial
and industrial plans), the tactical level (the \emph{Master Production Scheduling}
 (MPS) and the \emph{Material
Requirement Planning} (MRP)) and the operational level (a detailed scheduling and 
a shop floor control). 
MRPII have been also  extended  to take into account:
the imprecision on  quantities   of
demands (MPS)~\cite{HT00}, 
the imprecision on quantities  of   demands and uncertain orders~\cite{GLRV05} (MRP)
and the imprecision on 
 quantities and on dates of demands with uncertain order dates~\cite{GTG10} (MRP).

 In this paper, we wish to investigate the part of
the MRPII process. Namely, 
 the procurement process in the tactical level in the collaborative context. 
 Our purpose is 
 to help the decision maker of a procurement service to
evaluate a performance of a given procurement plan with ill-known gross requirements
 and to
compute a procurement plan in a collaborative supply chain (with and without
supplier capacity sharing due to a procurement contract)  with ill-known gross requirements.

Several
production planning problems have been adapted  to the case of fuzzy demands:
\emph{economic order quantity}~\cite{KH02,PKMM08}, 
\emph{multi-period planning}~\cite{HT00,GLRV05,GTG10,SY08,YLS09,MPG07,MPP10,TRGZ07}, and the \emph{problem of supply chain planning}
(production distribution, centralized supply chain)~\cite{AFGA07,LC09,PMPV09,PMJB10,PRP99,WS05}.
 In the literature, 
 there are two popular
 families of approaches  for coping with fuzzy parameters.
 In the first family,  a \emph{defuzzification} is first performed 
 and then deterministic optimization methods are used~\cite{PMPV09,PMJB10}.
 In the second one,
the objective is expressed in the setting of 
\emph{possibility theory}~\cite{DP88}  and \emph{credibility theory}~\cite{L04}.
We can distinguish: the possibilistic programming  (a~fuzzy mathematical programming) in which
a solution optimizing
 a criterion
 based on the \emph{possibility} 
  measure is built~\cite{MPP10,TRGZ07},
  the \emph{credibility} measure based programming in which
   the credibility measure is used to guaranty a service level 
   (chance constraints on the inventory level)~\cite{LLS09} or
  the goal is to choose a solution that optimizes a criterion
 based on the credibility measure~\cite{SY08}
 and  a decision support based on the propagation of the uncertainty to the inventory level and backordering level~\cite{HT00,GLRV05,GTG10}.
 Here, we restrict our attention to uncertainty propagation 
 in MRP (the tactical level)~\cite{HT00,GLRV05,GTG10} 
 and we propose methods both for  evaluating a procurement
plan in terms of costs under uncertain demands and for computing 
a procurement plan
 which minimizes
the impact of uncertainty on costs,
since
the approaches proposed in the literature 
are not able to do  this.

Popular 
setting of problems for hedging against 
uncertainty  of parameters is \emph{robust optimization}~\cite{KY97}.
In the robust optimization setting the uncertainty is modeled by specifying a
set of all possible realizations of the parameters called \emph{scenarios}. No probability distribution in the scenario set is given. 
The value of each parameter may fall within a given closed
interval and the set of scenarios is the Cartesian product of
these intervals.
Then, in order to choose a solution,
two optimization criteria,
called the \emph{min-max} and  the \emph{min-max regret}, can be
adopted. 
Under the min-max criterion, we seek a solution that minimizes 
the largest cost over all scenarios. Under the min-max regret criterion we wish to find a solution, which minimizes the largest deviation from optimum over all scenarios.

In this paper, we are interested in  computing
a robust procurement plan (with and without delivering capacity
of the supplier sharing). The delivering capacity are composed of
two bounds: the lower one being  the minimal accepted quantity
that is sent to the customer and the upper bound which is due to a
production capacity of the supplier. Moreover the customer
accepts to have backordering but it is more penalized that
inventory.
This problem is equivalent to the \emph{problem of production
planning} with backordering, more precisely to a certain version of
the \emph{lot sizing problem} (see, e.g.,~\cite{WW58,FLK80}),   where: the procured quantities are
production quantities, a \emph{production plan}; delivering constraints are production
constraints, \emph{capacity limits on production plans}; and the gross requirements are  \emph{demands}.
Thus, the problem consists in
finding a  production plan  that 
fulfills capacity limits and 
 minimizes the total cost of storage and backordering subject to the conditions of satisfying each demand.
It is efficiently
solvable when the demands are precisely
known (see, e.g.,~\cite{DK97,KGW03,AH08}).
However, the demands are seldom precisely known in advance
and the uncertainty must be taken into account.

In this paper, we consider the above problem
with uncertain demands modeled by fuzzy intervals. The membership function
of a fuzzy interval  is a \emph{possibility distribution}
describing, for each value of the demand, the extent
to which it is a possible value. 
In other words, it means that
the value of this demand belongs to a $\lambda$-cut of the fuzzy interval 
with the degree of necessity (confidence)
$1-\lambda$. 
To evaluate  a production plan, we assign to it,
 \emph{degrees of possibility and
necessity} that  its cost does not exceed a given threshold and
a degree of necessity
that costs of the plan fall within 
a given fuzzy goal.
In order to find ``robust solutions'' under fuzzy demands,
we apply two criteria.
The first one consists in choosing
 a production plan
which maximizes 
the degree of necessity (certainty)   that its cost does not exceed  a given threshold.
The second criterion is
weaker than the first one and consists   
in choosing a plan with
 the maximum degree of necessity
that costs of the plan fall within 
a given fuzzy goal. A similar criterion has been proposed in~\cite{KK09} for
discrete optimization problems with fuzzy costs.
We provide some methods for finding a  robust production plan
 with respect to the proposed criteria as well as for evaluating a given production plan
under fuzzy-valued demands which heavy rely on methods for 
 finding a robust production plan, called 
\emph{optimal robust production plan}, in 
the problem of production
planning under interval-valued demands with the robust  min-max criterion.
Namely,
it turns out that  the considered fuzzy problems can be reduced to examining a family of the interval
problems with the min-max criterion. Therefore, we generalize in this way the  min-max criterion 
under the interval structure of uncertainty to the fuzzy   case.

The paper is organized as follows. In
Section~\ref{ssnpt}, we recall some 
notions of possibility theory.  In
Section~\ref{sdlsp}, we present 
a lot-size problem with backorders and precise demands. 
In Section~\ref{srob}, we present our results. Namely, we
 investigate the interval case, that is the lot-size problem with backorders
in which uncertain demands are specified as closed intervals. We construct
 algorithms  for 
 finding an optimal robust production plan (a polynomial algorithm for the case without
 capacity limits and an iterative algorithm for the case with capacity limits)
 and for
 evaluating a given production plan (linear and mixed integer programing methods,
 a pseudopolynomial algorithm).
 An experimental evidence of the efficiency of the proposed algorithms is provided.
 In Section~\ref{sfp}, we extend our results from the previous section to the fuzzy case.
 We study the lot-size problem with backorders with uncertain
 demands modeled by fuzzy intervals in a setting of possibility theory.
 We provide methods for seeking a  robust production plan
 with respect to two proposed criteria as well as for evaluating a given production plan
under fuzzy-valued demands (the methods heavily rely on the ones from the interval case).
 The efficiency of the methods is confirmed experimentally. 

\section{Selected Notions of Possibility Theory}
\label{ssnpt}

A \emph{fuzzy interval} $\widetilde{A}$ is a fuzzy set in $\Rset$ whose membership function $\mu_{\widetilde{A}}$ is normal, quasi concave and upper semicontinuous. Usually, it is  assumed that the support of a fuzzy interval  is bounded. The main property of a fuzzy interval is the fact that all its $\lambda$-cuts, that is the sets $\widetilde{A}^{[\lambda]}=\{x:\mu_{\widetilde{A}}(x)\geq \lambda\}$, $\lambda \in (0,1]$, are closed intervals. We will assume that $\widetilde{A}^{[0]}$ is the smallest closed set containing the support of $\widetilde{A}$. So, every fuzzy interval $\widetilde{A}$ can be represented as a family of closed intervals $\widetilde{A}^{[\lambda]}=[a^{-[\lambda]},a^{+[\lambda]}]$, parametrized by the value of $\lambda\in [0,1]$. In many applications, the class of \emph{trapezoidal fuzzy intervals} is used.
 A trapezoidal fuzzy interval, denoted by a quadruple $\widetilde{A}=(a,b,c,d)$
 and  its membership function has the following form:
 \[
\mu_{\widetilde{A}}(z)=
\begin{cases}
	0 &\text{if $z\leq a$,}\\
	 \frac{z-a}{b-a} &\text{if $a< z < b $,}\\
	 1 &\text{if $b\leq z \leq c $,}\\
	 \frac{d-z}{d-c} &\text{if $c< z < d $,}\\
	 0 &\text{if $z\geq d$.}
\end{cases}
\]
Its  $\lambda$-cuts are simply
$[a+\lambda(b-a), d-\lambda (d-c)]$ for $\lambda\in [0,1]$. Notice that this representation contains  \emph{triangular fuzzy intervals}  ($b=c$).
 
Let us now recall the possibilistic interpretation of fuzzy intervals.
\emph{Possibility theory}~\cite{DP88} is an approach to handle  incomplete
information and it
relies on two dual measures: \emph{possibility} and
\emph{necessity}, which express plausibility and certainty of events.
 Both measures
are built from a \emph{possibility distribution}. 
Let a fuzzy interval $\widetilde{A}$ be attached with 
a single-valued variable~$a$ (uncertain real quantity). 
The membership function $\mu_{\widetilde{A}}$ is understood as
a possibility distribution, $\pi_{a}=\mu_{\widetilde{A}}$, which 
describes the set of more or less plausible,
mutually exclusive values of the variable~$a$. 
It  can encode a family of probability functions~\cite{DP92}. In particular, a degree of possibility can be viewed as
the upper bound of a degree of probability~\cite{DP92}.
The value of $\pi_{a}(v)$  represents the possibility degree of the assignment~$a=v$, i.e.
$\mathrm{\Pi}(a=v)=
\pi_{a}(v)=\mu_{\widetilde{A}}(v)$,
 where $\mathrm{\Pi}(a=v)$ is the possibility of the event that 
 $a$ will take the value of $v$.
In particular, $\pi_{a}(v)=0$ 
means that $a=v$ is impossible 
and $\pi_{a}(v)>0$ means that $a=v$ is plausible. 
Equivalently,
 it means that the value of $a$ 
belongs to a $\lambda$-cut $\widetilde{A}^{[\lambda]}$ with
confidence (or degree of necessity) $1-\lambda$.
A detailed interpretation of the possibility distribution and some methods of obtaining it from the possessed knowledge are described in~\cite{DP88,DP94}. 
Let $\widetilde{G}$ be a fuzzy interval. Then ``$a\in \widetilde{G}$''
is a \emph{fuzzy event}.
The \emph{possibility} of ``$a \in \widetilde{G}$'', 
denoted by $\Pi(a\in \widetilde{G})$, is
as follows~\cite{DFF03}: 
\begin{equation}
\Pi(a \in \widetilde{G})=\sup_{v\in \Rset}\min\{\pi_{a}(v),\mu_{\widetilde{G}}(v)\}.
\label{dposs}
\end{equation}
$\Pi(a \in \widetilde{G})$ evaluates the extent to which  
``$a \in \widetilde{G}$''  is possibly true. 
The \emph{necessity}   of  event
``$a \in \widetilde{G}$'', denoted by 
$\mathrm{N}(a\in \widetilde{G})$, is
as follows: 
\begin{align}
\mathrm{N}(a \in \widetilde{G})&=1-\Pi(a\not\in \widetilde{G})
=1-\sup_{v\in \Rset}\min\{\pi_{a}(v),1-\mu_{\widetilde{G}}(v)\}\label{dnec}\\
&=\inf_{v\in\Rset}\max\{1-\pi_{a}(v),\mu_{\widetilde{G}}(v)\},\nonumber
\end{align}
where $1-\mu_{\widetilde{G}}$ is the membership function
of the complement of the fuzzy set~$\widetilde{G}$. 
$\mathrm{N}(a \in \widetilde{G})$ evaluates the extent to which  
``$a \in \widetilde{G}$'' is certainly true. Observe that if $G$ is a classical set, then $\Pi(a\in G)=\sup_{v\in G} \pi_{a}(v)$ and $\mathrm{N}(a\in G)=1-\sup_{v\notin G}\pi_{a}(v)$.

\section{The Deterministic Problem}
\label{sdlsp}

We are given $T$ periods. For period~$t$, $t=1,\ldots,T$,
let $d_t$ be the demand in period~$t$, $d_t\geq 0$
(here we assume that the demands are precise), $x_t$ the production
amount in period~$t$, $x_t\geq 0$, $l_t$, $u_t$ the production capacity limits on~$x_t$. 
Let $\Xset\subseteq \Rset^{T}_{+}$
be the set of feasible production amounts.
Two cases are distinguished, the case
\emph{with no capacity limits} $\Xset=\{(x_1,\ldots,x_{T})\;:\;
x_t\geq 0, t=1,\ldots,T\}$
and the one \emph{with capacity limits}
$\Xset=\{(x_1,\ldots,x_{T})\;:\;
l_t\leq x_t\leq u_t, t=1,\ldots,T\}$.
Set $\mathbf{D}_t=\sum_{i=1}^{t}d_i$ and
 $\mathbf{X}_t=\sum_{i=1}^{t}x_i$,
 $\mathbf{D}_t$ and  $\mathbf{X}_t$ stand for
 the cumulative demand up to period~$t$
and  the production level up to period~$t$, respectively.
Obviously, $\mathbf{X}_{t-1}\leq \mathbf{X}_t$ and 
$\mathbf{D}_{t-1}\leq \mathbf{D}_t$, $t=2,\ldots,T$.
The costs of carrying one unit of 
inventory from period~$t$ to period~$t+1$ is given by~$c^I_t\geq 0$
and
the costs of backordering one unit 
from period~$t+1$ to period~$t$ is given by $c^B_t\geq 0$. 
The nonnegative real function $L_t(u,v)$ represents either the cost of storing inventory 
from period~$t$ to period~$t+1$
or  the cost of backordering quantity from period~$t+1$ to
period~$t$,
namely
$L_t(\mathbf{X}_t,\mathbf{D}_t)=c^I_t(\mathbf{X}_t-\mathbf{D}_t)$
if $\mathbf{X}_t\geq \mathbf{D}_t$; $c^{B}_t(\mathbf{D}_t-\mathbf{X}_t)$ otherwise.
The function has the form 
$L_t(\mathbf{X}_t,\mathbf{D}_t)=\max\{c^{I}_t(\mathbf{X}_t-\mathbf{D}_t),
c^{B}_t(\mathbf{D}_t-\mathbf{X}_t)\}$.

Our production planning problem with the deterministic (precise) demands consists in
finding a feasible production plan  $\pmb{x}=(x_1,\ldots,x_{T})$,
$\pmb{x}\in \Xset$, that  minimizes the total cost of
storage and backordering subject to the conditions of satisfying each demand, namely
\begin{equation}
\min_{\pmb{x}\in \Xset}F(\pmb{x})=
\min_{\pmb{x}\in \Xset}\sum_{t=1}^{T}L_t(\mathbf{X}_t,\mathbf{D}_t).
\label{ddls}
\end{equation}
Obviously, the problem~(\ref{ddls}) is a version of the \emph{classical
dynamic lot-size problem with backorders} (see, e.g.,~\cite{WW58,FLK80}).
 Without loss of generality, we can assume that
an initial inventory~$I_0$ and an initial backorder~$B_0$
are equal to zero. Otherwise, one can append period~$0$
and assign $x_0=I_0$ and $d_0=0$ 
with zero inventory cost  
if $I_0>0$ or 
assign $x_0=0$ and $d_0=B_0$
with zero backorder cost
if $B_0>0$. 

In the case with no capacity limits problem~(\ref{ddls})
has a trivial optimal solution equal to $(d_1,\ldots,d_T)$.
In the case with capacity limits, (\ref{ddls}) can be formulated as
the minimum cost flow problem 
(see,~e.g.,~\cite{AMO93}):
\begin{equation}
 \begin{array}{lll}
 \min & \displaystyle\sum_{t=1}^{T}(c^I_t I_t+ c^B_t B_t)&\\
\text{s.t. } & B_t- I_t=\sum_{j=1}^{t}(d_j-x_j), & t=1,\ldots,T,\\
       &l_t\leq x_t\leq u_t,  &t=1,\ldots,T,\\
       &B_t,I_t\geq 0,  &t=1,\ldots,T.
 \end{array}
 \label{ddlsf}
\end{equation}
Problem~(\ref{ddlsf}) can  efficiently solved, for instance, by 
an algorithm presented in~\cite{AH08}
 that takes into account  a special structure of the underling network.

\section{Robust Problem}
\label{srob}
Assume that demands $d_t$, $t=1,\ldots,T$, in problem~(\ref{ddls}), are 
only known to belong to intervals $D_{t}=[d^{-}_{t},d^{+}_{t}]$,
$d^{-}_{t}\geq 0$.
This means that we neither know the exact 
demands, nor can we set them precisely. 
We assume that the demands are unrelated to one another
and there is no probability distribution in $D_t$, $t=1,\ldots,T$.
 A vector $S=(d_1,\ldots,d_T)$, $d_t\in D_t$, that represents an assignment of demands $d_t$ to periods~$t$, $t=1,\ldots,T$, is
called a \emph{scenario}. Thus every scenario
  expresses a realization of the demands.
We denote by $\Gamma$
   the set of all the scenarios, i.e. $\Gamma 
   =[d^{-}_{1},d^{+}_{1}]\times\cdots\times [d^{-}_{T},d^{+}_{T}]$.
   Among the scenarios of $\Gamma$
 \emph{extreme scenarios} can be distinguished,
 that is the ones, which belong to 
$\{d^{-}_{1},d^{+}_{1}\}\times\cdots\times \{d^{-}_{T},d^{+}_{T}\}$,
the set of extreme scenarios is denoted by $\Gamma_{\text{ext}}$.
We denote by $S^{+}$ (resp. $S^{-}$)
the extreme scenario in which all the demands
are set to their upper (resp. lower) bounds.
 The demand  and the cumulative demand in period~$t$ 
 under scenario~$S$ are denoted by 
 $d_t(S)\in D_t$ and $\mathbf{D}_t(S)$, respectively,
 $\mathbf{D}_t(S)=\sum_{i=1}^{t}d_i(S)$. Clearly, for every $S\in \Gamma$ it holds 
 $\mathbf{D}_{t-1}(S)\leq \mathbf{D}_t(S)$, $t=2,\ldots, T$, and
 $\mathbf{D}_t(S)\in [\mathbf{D}_t(S^{-}), \mathbf{D}_t(S^{+})]$.
 The function 
 $L_t(\mathbf{X}_t,\mathbf{D}_t(S))=
 \max\{c^{I}_t(\mathbf{X}_t-\mathbf{D}_t(S)),
c^{B}_t(\mathbf{D}_t(S)-\mathbf{X}_t)\}$,
 represents either the cost of storing inventory 
from period~$t$ to period~$t+1$
or  the cost of backordering quantity from period~$t+1$ to
period~$t$ under scenario~$S$.
Now $F(\pmb{x},S)$ denotes the total cost of 
a production plan $\pmb{x}\in \Xset$ under scenario~$S$, i.e.
$F(\pmb{x},S)=
\sum_{t=1}^{T}L_t(\mathbf{X}_t,\mathbf{D}_t(S))$.

In order to choose a robust production plan,
one of robust criteria,
called the \emph{min-max} can be
adopted (see, e.g.~\cite{KY97}).
In the \emph{min-max} version of problem~(\ref{ddls}), we seek a 
feasible production plan with the minimum
the worst total cost
over all scenarios, that~is
\[
\textsc{ROB}:\; \min_{\pmb{x}\in \Xset}A(\pmb{x})=
\min_{\pmb{x}\in \Xset}\max_{S\in\Gamma} F(\pmb{x},S)
=
\min_{\pmb{x}\in \Xset}\max_{S\in\Gamma} \sum_{t=1}^{T}
L_t(\mathbf{X}_t,\mathbf{D}_t(S)).
\]
In other words, we wish to find among all production plans the one that
minimizes the maximum production plan cost over all scenarios, 
that minimizes~$A(\pmb{x})$, $A(\pmb{x})$ is the \emph{maximal cost of production
plan}~$\pmb{x}$. 
An optimal solution $\pmb{x}^r$ to the problem~\textsc{ROB}
is called  \emph{optimal robust production plan}.
 
Let $\pmb{x}\in \Xset$ be a given production plan.
A scenario~$S^{o}\in \Gamma$ that minimizes  the total cost $F(\pmb{x},S)$ of 
the production plan $\pmb{x}$
is
called \emph{optimistic scenario}.
A scenario~$S^{w}\in \Gamma$ that maximizes the total cost 
 $F(\pmb{x},S)$ of the production plan
 $\pmb{x}$ is
called \emph{the
worst case scenario}.

\subsection{Evaluating  Production Plan}
\label{sepp}

In this section, we 
show how to evaluate a given production plan
$\pmb{x}^{*}\in \Xset$. 
We first consider the problem of determining 
the optimal interval, 
$F_{\pmb{x}^{*}}=[f_{\pmb{x}^{*}}^{-},f_{\pmb{x}^{*}}^{+}]$,
containing possible values of costs of the production plan~$\pmb{x}^{*}$
which can be rigorously defined as the following optimization problems:
\begin{align}
f_{\pmb{x}^{*}}^{-}=&\min_{S\in \Gamma}F(\pmb{x}^{*},S),\label{popts}\\
f_{\pmb{x}^{*}}^{+}=&\max_{S\in \Gamma}F(\pmb{x}^{*},S).\label{ppess}
\end{align}
It is easily seen that the problem of computing the optimal lower bound on costs 
of~$\pmb{x}^{*}$ (\ref{popts}) is equivalent to 
the one of determining   an optimistic 
scenario~$S^{o}$ for $\pmb{x}^{*}$, namely 
$f_{\pmb{x}^{*}}^{-}=F(\pmb{x}^{*},S^{o})= \min_{S\in \Gamma}F(\pmb{x}^{*},S)$.
Similarly, 
 the problem of computing the optimal upper bound on costs 
of~$\pmb{x}^{*}$ (\ref{ppess}) is equivalent to 
the problem of determining   a worst case
scenario~$S^{w}$ for $\pmb{x}^{*}$, i.e.
$f_{\pmb{x}^{*}}^{+}=
A(\pmb{x}^{*})=F(\pmb{x}^{*},S^{w})= \max_{S\in \Gamma}F(\pmb{x}^{*},S)$. Thus
\begin{equation}
F_{\pmb{x}^{*}}=[f_{\pmb{x}^{*}}^{-},f_{\pmb{x}^{*}}^{+}]=[F(\pmb{x}^{*},S^{o}), F(\pmb{x}^{*},S^{w})].
\label{icost}
\end{equation}

Using the optimal interval $F_{\pmb{x}^{*}}$ of possible values of costs of
production plan~$\pmb{x}^{*}$ 
allows us to evaluate  \emph{possibility} and \emph{necessity} that the cost of 
the plan does not exceed a given threshold under uncertain demands modeled by 
intervals. Hence,
in order to assert \emph{possibility}
 that the cost of 
the plan does not exceed a given threshold~$g$, i.e. to assert whether there exits a scenario $S\in \Gamma$ for which $F(\pmb{x}^{*},S)\leq g$,
it suffices to determine an optimistic scenario $S^{o}$, the optimal
lower bound $f_{\pmb{x}^{*}}^{-}=F(\pmb{x}^{*},S^{o})$ and
evaluate $f_{\pmb{x}^{*}}^{-}\leq g$. If the inequality holds then there exists a scenario;
otherwise not. Similarly,  evaluating
 \emph{necessity} that the cost of 
the plan does not exceed a given threshold~$g$,
i.e. asserting  whether  $F(\pmb{x}^{*},S)\leq g$
for every scenario $S\in \Gamma$,
we only need to determine worst 
case scenario~$S^{w}$,
the optimal
upper bound $f_{\pmb{x}^{*}}^{+}=F(\pmb{x}^{*},S^{w})$
and
evaluate $,f_{\pmb{x}^{*}}^{+}\leq g$. 
Thus, evaluating a production plan boils down to computing its optimistic and worst case
scenarios.

Let us consider the problem of computing an optimistic 
scenario for a given production plan $\pmb{x}^{*}\in \Xset$, that
is the problem (\ref{popts}).
Its minimum is attained for some $S\in\Gamma$,
since $F(\pmb{x},S)$ is a continuous function on the bounded closed set $\Gamma$.
Problem~(\ref{popts}) 
can be formulated as a linear programming problem:
\begin{equation}
 \begin{array}{llll}
 f_{\pmb{x}^{*}}^{-}=&\min &\sum_{t=1}^{T}(c^I_t I_t+ c^B_t B_t)&\\
&\text{s.t. } & B_t- I_t=\sum_{j=1}^{t}(s_j-x^{*}_j), & t=1,\ldots,T,\\
  &     &s_t\in [d^{-}_t,d^{+}_t],  &t=1,\ldots,T,\\
    &   &B_t,I_t\geq 0,  &t=1,\ldots,T.
 \end{array}
 \label{poptslp}
\end{equation}
If $s^{o}_t$, $B^{o}_t$ and $I^{o}_t$, $t=1,\ldots,T$,
 is an optimal solution to problem~(\ref{poptslp}), then
$S^{o}=(s^{o}_1,\ldots,s^{o}_T)$ is an optimistic scenario for $\pmb{x}^{*}$ 
(an optimistic realization of uncertain demands)
and 
$I^{o}_t$  is storing inventory amount
from period~$t$ to period~$t+1$ and 
$B^{o}_t$ represents backordering amount from period~$t+1$ to
period~$t$ under the optimistic scenario $S^{o}$.
The problem~(\ref{poptslp}) can be reduced to the classical minimum
cost flow problem and effectively solved by algorithms that take into 
a special structure of the underling network (see, e.g.,~\cite{AH08}).
Hence, and fact that $c^I_t,c^B_t\geq 0$ it follows that for $t=1,\ldots,T$
one of $I^{o}_t$ and $B^{o}_t$ is zero.

Let us study   the problem of computing a worst case 
scenario for a given production plan $\pmb{x}^{*}\in \Xset$, that
is the problem (\ref{ppess}).
Since $F(\pmb{x}^{*},S)$ is a continuous function on the bounded closed set~$\Gamma$,
it attains maximum  for some $S\in\Gamma$.
The problem~(\ref{ppess})
can be formulated as a mixed integer programming problem (MIP):
\begin{equation}
 \begin{array}{llll}
  f_{\pmb{x}^{*}}^{+}=&\max &\sum_{t=1}^{T}(c^I_t I_t+ c^B_t B_t)&\\
&\text{s.t. } & B_t- I_t=\sum_{j=1}^{t}(s_j-x^{*}_j), & t=1,\ldots,T,\\
  &     &s_t\in [d^{-}_t,d^{+}_t],  &t=1,\ldots,T,\\
    &    &B_t,I_t\geq 0,  &t=1,\ldots,T,\\
      &  &I_t\leq (1-\delta_t)\sum_{j=1}^{t}(x^{*}_j-d^{-}_j),  &t=1,\ldots,T,\\
      & &B_t\leq \delta_t\sum_{j=1}^{t}(d^{+}_j-x^{*}_j),  &t=1,\ldots,T,\\
      & & \delta_t\in \{0,1\}, &t=1,\ldots,T.      
 \end{array}
 \label{imostnec}
\end{equation}
Let $s^{w}_t$, $B^{w}_t$, $I^{w}_t$ and $\delta_t$, $t=1,\ldots,T$, be
 an optimal solution to problem~(\ref{imostnec}). Then
$S^{w}=(s^{w}_1,\ldots,s^{w}_T)$ is a worst case scenario for $\pmb{x}^{*}$ 
(a pessimistic realization of uncertain demands)
and 
$I^{w}_t$  is storing inventory amount
from period~$t$ to period~$t+1$ and 
$B^{w}_t$ is backordering amount from period~$t+1$ to
period~$t$ under the worst case scenario~$S^{w}$.
The last two constraints model~(\ref{imostnec})
and the binary variables $\delta_t$ ensure
that storing inventory  from period~$t$ to period~$t+1$
and backordering  from period~$t+1$ to
period~$t$ is not performed simultaneously 
(either $I_t>0$ or $B_t>0$). If $\delta_t=1$ then
backordering is performed $B_t>0$; otherwise 
storing inventory is performed $I_t>0$.
Thus, the problem~(\ref{ppess}) turns out to be much harder than (\ref{popts}).

We now solve the problem~(\ref{ppess})  by means of \emph{dynamic programming}.
Let us present a
result which shows that
determining a worst case scenario~$S^{w}$ 
can be restricted to the vertices of $\Gamma$, that is to the set
of  extreme scenarios~$\Gamma_{\text{ext}}$.
We prove the convexity of the cost function on $\Gamma$.
\begin{prop}
Function $F(\pmb{x}^{*},S)$ is convex on $\Gamma$ for any fixed production 
plan~$\pmb{x}^{*}\in \Xset$.
\label{ofcon}
\end{prop}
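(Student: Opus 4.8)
The plan is to regard a scenario $S=(d_1,\ldots,d_T)$ as a point of $\Rset^T$ and to treat $F(\pmb{x}^{*},\cdot)$ as a function on the box $\Gamma=[d^{-}_{1},d^{+}_{1}]\times\cdots\times[d^{-}_{T},d^{+}_{T}]$, which is a convex set. Since the plan $\pmb{x}^{*}$ is fixed, each production level $\mathbf{X}_t=\sum_{i=1}^{t}x^{*}_i$ is a constant, whereas the cumulative demand $\mathbf{D}_t(S)=\sum_{i=1}^{t}d_i$ is a linear function of the coordinates of $S$. This linearity is the feature that drives the whole argument.

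First I would observe that each per-period cost is convex in $S$. Indeed,
\[
L_t(\mathbf{X}_t,\mathbf{D}_t(S))=\max\{c^{I}_t(\mathbf{X}_t-\mathbf{D}_t(S)),\,c^{B}_t(\mathbf{D}_t(S)-\mathbf{X}_t)\},
\]
and each of the two expressions inside the maximum is an affine function of $S$: the map $S\mapsto \mathbf{D}_t(S)$ is linear, $\mathbf{X}_t$ is a constant, and $c^{I}_t,c^{B}_t\geq 0$. Hence $L_t(\mathbf{X}_t,\mathbf{D}_t(\cdot))$ is the pointwise maximum of two affine functions on $\Gamma$, and the pointwise maximum of convex functions is convex; so $L_t(\mathbf{X}_t,\mathbf{D}_t(\cdot))$ is convex on $\Gamma$.

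Finally, $F(\pmb{x}^{*},S)=\sum_{t=1}^{T}L_t(\mathbf{X}_t,\mathbf{D}_t(S))$ is a sum of convex functions, hence convex on $\Gamma$, which completes the proof.

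As for the main obstacle: there is essentially none of substance. The only point requiring care is the bookkeeping that lets us view $F$ as a genuine function of the vector variable $S\in\Rset^T$ rather than being misled by the scenario notation; once $S$ is treated as the variable, the affinity of $\mathbf{D}_t(S)$ together with the two standard convexity-preserving operations (pointwise maximum and summation) finishes the argument immediately. It is perhaps worth remarking that this convexity is exactly what justifies the reduction, announced just before the proposition, of the worst-case search in~(\ref{ppess}) to the extreme scenarios $\Gamma_{\text{ext}}$, since a convex function on the polytope $\Gamma$ attains its maximum at a vertex.
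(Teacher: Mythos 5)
Your proof is correct and follows essentially the same route as the paper's: both observe that $c^{I}_t(\mathbf{X}^{*}_t-\mathbf{D}_t(S))$ and $c^{B}_t(\mathbf{D}_t(S)-\mathbf{X}^{*}_t)$ are convex (indeed affine) in $S$, so each $L_t$ is convex as a pointwise maximum, and $F(\pmb{x}^{*},\cdot)$ is convex as a sum. Your version merely spells out the affinity of $S\mapsto\mathbf{D}_t(S)$ explicitly, which the paper leaves implicit.
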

\begin{proof}
Function $c^I_t(\mathbf{X}^{*}_t-\mathbf{D}_t(S))$ and $c^{B}_t(\mathbf{D}_t(S)-\mathbf{X}^{*}_t)$
are convex on $\Gamma$. Then so are $\max\{c^{I}_t(\mathbf{X}^{*}_t-\mathbf{D}_t(S)),
c^{B}_t(\mathbf{D}_t(S)-\mathbf{X}^{*}_t)\}$
and $\sum_{t=1}^{T}\max\{c^{I}_t(\mathbf{X}^{*}_t-\mathbf{D}_t(S)),
c^{B}_t(\mathbf{D}_t(S)-\mathbf{X}^{*}_t)\}$.
\end{proof}
The following result allows us to 
reduce the set of scenarios~$\Gamma$  to the set
of  extreme scenarios~$\Gamma_{\text{ext}}$.
\begin{prop}
An optimal scenario for problem~(\ref{ppess}) (a worst case scenario) is an extreme one.
\label{pext}
\end{prop}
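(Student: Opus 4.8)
The plan is to combine the convexity of the cost function established in Proposition~\ref{ofcon} with the fact that $\Gamma=[d^{-}_{1},d^{+}_{1}]\times\cdots\times[d^{-}_{T},d^{+}_{T}]$ is a box, i.e.\ a compact convex polytope whose vertices (extreme points) are exactly the extreme scenarios $\Gamma_{\text{ext}}=\{d^{-}_{1},d^{+}_{1}\}\times\cdots\times\{d^{-}_{T},d^{+}_{T}\}$. First I would note that, since $F(\pmb{x}^{*},S)$ is continuous on the compact set $\Gamma$, the maximum in~(\ref{ppess}) is attained at some $S^{w}\in\Gamma$; write $M=\max_{S\in\Gamma}F(\pmb{x}^{*},S)$. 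The goal is to show that $M$ is already achieved on $\Gamma_{\text{ext}}$, which immediately yields the claim because $\Gamma_{\text{ext}}\subseteq\Gamma$.

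For the main step I would use a coordinatewise ``corner-pushing'' argument. Take any $S=(s_1,\ldots,s_T)\in\Gamma$ and fix every coordinate except the $k$-th. Restricting the convex function $F(\pmb{x}^{*},\cdot)$ to the segment obtained by varying only $s_k$ over $[d^{-}_{k},d^{+}_{k}]$ preserves convexity, so $s_k\mapsto F(\pmb{x}^{*},(s_1,\ldots,s_k,\ldots,s_T))$ is a convex function of a single real variable on a closed interval, and hence attains its maximum at one of the endpoints $d^{-}_{k}$ or $d^{+}_{k}$. Thus we may replace $s_k$ by an endpoint without decreasing $F(\pmb{x}^{*},S)$. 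Carrying this out successively for $k=1,\ldots,T$ transforms $S$ into a point $S'\in\Gamma_{\text{ext}}$ with $F(\pmb{x}^{*},S')\geq F(\pmb{x}^{*},S)$. Applying this to a maximizer $S^{w}$ produces an extreme scenario at which the value $M$ is still attained.

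An equivalent and perhaps cleaner route is to use the representation of points of a polytope: every $S\in\Gamma$ is a convex combination $S=\sum_{j}\lambda_j V_j$ of the vertices $V_j\in\Gamma_{\text{ext}}$ (with $\lambda_j\geq 0$, $\sum_j\lambda_j=1$), whence by convexity $F(\pmb{x}^{*},S)\leq\sum_{j}\lambda_j F(\pmb{x}^{*},V_j)\leq\max_{j}F(\pmb{x}^{*},V_j)$, so $\max_{S\in\Gamma}F(\pmb{x}^{*},S)\leq\max_{S\in\Gamma_{\text{ext}}}F(\pmb{x}^{*},S)$ and the reverse inequality is trivial. Either way the argument is essentially routine convex analysis, and I do not expect a genuine obstacle; the only point worth verifying is that the vertices of the box $\Gamma$ coincide with $\Gamma_{\text{ext}}$, which is immediate from the product-of-intervals structure. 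The real work has already been done in Proposition~\ref{ofcon}, whose sole purpose here is to license the convexity (endpoint-maximization) step.
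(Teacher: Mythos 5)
Your proposal is correct and follows essentially the same route as the paper: both invoke the convexity from Proposition~\ref{ofcon} together with the fact that $\Gamma$ is a hyper-rectangle, and conclude that a convex function attains its maximum at a vertex. The only difference is that the paper simply cites this classical fact (Martos, 1975), whereas you prove it yourself (twice -- by coordinatewise endpoint-pushing and by the convex-combination/Jensen argument), which makes your version self-contained but not conceptually different.
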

\begin{proof}
	Function $F(\pmb{x}^{*},S)$ attains its maximum in $\Gamma$.
	Since $F(\pmb{x}^{*},S)$ is convex (Proposition~\ref{ofcon}) 
	and $\Gamma$ is the hyper-rectangle,
	an optimal scenario 
	for problem~(\ref{ppess}) is an extreme one
	(see, e.g.,~\cite{M75}).
\end{proof}
Applying Proposition~\ref{pext}, we can rewrite problem~(\ref{ppess}) as:
\begin{equation}
f_{\pmb{x}^{*}}^{+}=
A(\pmb{x}^{*})=F(\pmb{x}^{*},S^{w})= \max_{S\in \Gamma_{\text{ext}}}F(\pmb{x}^{*},S).
\label{ppessex}
\end{equation}

We are now ready to give a \emph{dynamic programming based  algorithm} for solving 
problem~(\ref{ppessex}). Let  $\mathbb{D}_t$ be the set of feasible 
cumulative demand levels in period~$t$, $t=1,\ldots, T$, i.e.
$\mathbb{D}_t=\{\mathbf{D}_t(S^{-}), \mathbf{D}_t(S^{-})+1,\ldots, \mathbf{D}_t(S^{+})\}$,
let $\mathcal{L}_{t-1}(\mathbf{D})$ be the maximal cost of a given production plan~$\pmb{x}^{*}$
over periods $t,\ldots,T$, when the cumulative demand level up to period~$t-1$ is
equal to~$\mathbf{D}$, $\mathbf{D}\in \mathbb{D}_{t-1}$, 
$\mathcal{L}_{t-1}: \mathbb{D}_{t-1}\rightarrow\Rset_{+}$.
Set 
$\mathbb{D}_0=\{0\}$. It is evident that:
\begin{align}
\mathcal{L}_{T}(\mathbf{D})&=0  &\mathbf{D}\in \mathbb{D}_{T},\label{brec1}\\
\mathcal{L}_{t-1}(\mathbf{D})&=\max\left\{
\begin{array}{l}
L_t(\mathbf{X}^{*}_t,\mathbf{D}+d^{-}_t)+\mathcal{L}_{t}(\mathbf{D}+d^{-}_t)\label{brec2}\\
L_t(\mathbf{X}^{*}_t,\mathbf{D}+d^{+}_t)+\mathcal{L}_{t}(\mathbf{D}+d^{+}_t)
\end{array}  
\right\}
&\mathbf{D}\in \mathbb{D}_{t-1},\\
&&t=T,\ldots,1.\nonumber
\end{align}
The maximal cost of production plan~$\pmb{x}^{*}$ over period $1,\ldots,T$ is
equal to  $\mathcal{L}_{0}(0)$, $\mathcal{L}_{0}(0)=f_{\pmb{x}^{*}}^{+}$,
which is computed according to the backward recursion (\ref{brec1}) and (\ref{brec2}).
The corresponding to~$\pmb{x}^{*}$  worst case scenario~$S^{w}$ can be determined
by a forward recursion technique. It is sufficient to store for 
each $\mathbf{D}\in \mathbb{D}_{t-1}$ the value for which 
the maximum in (\ref{brec2}) is attained, that is
either $\mathbf{D}+d^{-}_t$ or $\mathbf{D}+d^{+}_t$.
The running time of the dynamic programming based  algorithm is
$O(T \cdot \mathbf{D}_T)$, which is pseudo-polynomial.
We have  thus proved the following theorem.
\begin{thm}
There is an algorithm 
for computing the maximal cost of 
a given production plan~$\pmb{x}^{*}$ and its a worst case scenario~$S^{w}$,
which runs in $O(T \cdot \mathbf{D}_T)$.
\end{thm}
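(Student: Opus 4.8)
The plan is to establish correctness of the backward recursion (\ref{brec1})--(\ref{brec2}) by induction and then to bound its running time. By Proposition~\ref{pext} and the reformulation~(\ref{ppessex}), it suffices to show that the recursion evaluates $\max_{S\in\Gamma_{\text{ext}}}F(\pmb{x}^{*},S)$, because this quantity equals the desired maximal cost $f_{\pmb{x}^{*}}^{+}=A(\pmb{x}^{*})$, and then to recover a maximizing extreme scenario $S^{w}$ by a forward trace.

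First I would make precise the invariant encoded by $\mathcal{L}_{t-1}$: for each period index and each admissible level $\mathbf{D}\in\mathbb{D}_{t-1}$, the value $\mathcal{L}_{t-1}(\mathbf{D})$ should equal the largest value of $\sum_{i=t}^{T}L_i(\mathbf{X}^{*}_i,\mathbf{D}+\sum_{j=t}^{i}d_j)$ over all extreme tails $d_i\in\{d^{-}_i,d^{+}_i\}$, $i=t,\dots,T$. The central structural observation, which I would isolate at the outset, is that $F(\pmb{x}^{*},S)=\sum_{t=1}^{T}L_t(\mathbf{X}^{*}_t,\mathbf{D}_t(S))$ is additive across periods and that each summand depends on the scenario only through the cumulative demand $\mathbf{D}_t(S)$. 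Hence the cumulative demand is a \emph{sufficient state}: once $\mathbf{D}_{t-1}(S)=\mathbf{D}$ is fixed, the maximal cost attainable over the remaining periods is a function of $\mathbf{D}$ alone, independent of how that level was reached. This Markovian separability is what legitimizes the period-by-period decomposition.

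With this in hand the induction is routine. The base case $\mathcal{L}_T(\mathbf{D})=0$ holds because the tail sum over the empty range is zero. For the step I would fix $\mathbf{D}\in\mathbb{D}_{t-1}$, peel off the choice $d_t\in\{d^{-}_t,d^{+}_t\}$, and use additivity to split the tail maximum as $L_t(\mathbf{X}^{*}_t,\mathbf{D}+d_t)$ plus the maximal remaining cost starting from cumulative level $\mathbf{D}+d_t$; by the inductive hypothesis the latter equals $\mathcal{L}_t(\mathbf{D}+d_t)$, and maximizing over the two branches yields exactly~(\ref{brec2}). Here I would also note that since the demands are integer valued, any extreme cumulative level is an integer lying in $[\mathbf{D}_t(S^{-}),\mathbf{D}_t(S^{+})]$, so the successors $\mathbf{D}+d^{-}_t,\mathbf{D}+d^{+}_t$ always fall inside $\mathbb{D}_t$ and the referenced table entries are defined. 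Evaluating the recursion at the single initial state then gives $\mathcal{L}_0(0)=\max_{S\in\Gamma_{\text{ext}}}F(\pmb{x}^{*},S)=f_{\pmb{x}^{*}}^{+}$. A worst case scenario $S^{w}$ is reconstructed by the standard forward trace: recording at each state the maximizing branch and following these choices from $\mathbf{D}=0$ forward produces an extreme scenario attaining the maximum.

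For the complexity, integrality makes each $\mathbb{D}_t=\{\mathbf{D}_t(S^{-}),\dots,\mathbf{D}_t(S^{+})\}$ a finite integer interval with $|\mathbb{D}_t|=\mathbf{D}_t(S^{+})-\mathbf{D}_t(S^{-})+1\le\mathbf{D}_T+1$, since cumulative demands are nondecreasing in $t$ and bounded above by $\mathbf{D}_T=\mathbf{D}_T(S^{+})$. Filling one level costs, per state, a constant number of evaluations of $L_t$, two lookups, and a comparison, i.e. $O(\mathbf{D}_T)$ work; summing over the $T$ periods gives $O(T\cdot\mathbf{D}_T)$, and the forward trace adds only $O(T)$, establishing the pseudo-polynomial bound. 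I expect the only genuinely delicate point to be the sufficiency-of-state argument in the second paragraph, which underpins the decomposition; the induction and the state-count estimate are then bookkeeping.
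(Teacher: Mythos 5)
Your proposal is correct and follows essentially the same route as the paper: the paper's proof is exactly the construction of the backward recursion (\ref{brec1})--(\ref{brec2}) over the state sets $\mathbb{D}_t$ (justified by Proposition~\ref{pext}), recovery of $S^{w}$ by a forward trace of stored argmax choices, and the $O(T\cdot\mathbf{D}_T)$ count of states times constant work per state. The only difference is that you spell out the sufficiency-of-state induction that the paper dismisses as ``evident,'' together with the integrality assumption on demands that the paper uses implicitly when defining $\mathbb{D}_t$ as an integer range.
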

Since finding a worst case scenario
requires  taking into account only
extreme demand scenarios (see Proposition~\ref{pext}),
the running time of the above algorithm may be additionally refined
by reducing the cardinalities  
of sets  $\mathbb{D}_{t}$, $t=1,\ldots,T$,
in (\ref{brec1}) and (\ref{brec2}).
Note that we need only consider 
cumulative demand levels~$\mathbf{D}$  which
can be obtained by summing  instantiated    demands, 
at  their lower or upper bounds, in the periods up to~$t$.
Namely,  $\mathbf{D}\in \mathbb{D}_{t}$
if and only if  $\mathbf{D}=\sum_{k=1}^{t} d_k$, where
$d_k\in \{d^{-}_k, d^{+}_k\}$.
Hence, each  reduced set
of possible  cumulative demand levels
$\mathbb{D}_{t}$ has form $\{\mathbf{D}^1_t,\ldots, \mathbf{D}^l_t\}$.
Of course, $l$ is bound  by
 $\mathbf{D}_t(S^{+})$.
 An idea of  the improved
dynamic programming based  algorithm can be outlined as follows:

\noindent \emph{Step~1 (reducing sets $\mathbb{D}_{t}$):} built
a directed acyclic network composed of node
that is  associated with  $\mathbb{D}_{0}=\{0\}$ and 
$T$ layers that are associated with 
the reduced sets 
of possible  cumulative demand levels
$\mathbb{D}_{t}=\{\mathbf{D}^1_t,\ldots, \mathbf{D}^l_t\}$,
 $t=1,\ldots,T$.
The $t$-th layer is composed of nodes
corresponding  to cumulative demand levels $\mathbf{D}$, 
$\mathbf{D}\in
\mathbb{D}_{t}$.
The arc between $\mathbf{D}^u_{t-1}$ and $\mathbf{D}^v_t$ exists if and only if  either
 $\mathbf{D}^v_t=\mathbf{D}^u_{t-1}+d^{-}_t$ or
$\mathbf{D}^v_t=\mathbf{D}^u_{t-1}+d^{+}_t$.
Observe that, 
except for nodes from the last
layer,
each node in the network has
exactly two outgoing arcs.
An example of the constructed network is presented in Figure~\ref{fig1}.

\noindent \emph{Step~2:} 
compute the maximal cost of production plan~$\pmb{x}^{*}$ over period $1,\ldots,T$ according to the backward recursion (\ref{brec1}) and (\ref{brec2}) in the constructed
network with the layers corresponding to
reduced sets $\mathbb{D}_{t}$
and store for each $\mathbf{D}\in \mathbb{D}_{t-1}$
 the value $\mathbf{D}\in \mathbb{D}_{t}$ for which 
the maximum in (\ref{brec2}) is attained.

\noindent \emph{Step~3:} determine a worst case scenario for
production plan~$\pmb{x}^{*}$ by performing a simple 
forward recursion in the constructed network using the stored (in Step~2)
values for which 
the maxima in (\ref{brec2}) are attained.
\begin{figure}
\centering
	\includegraphics{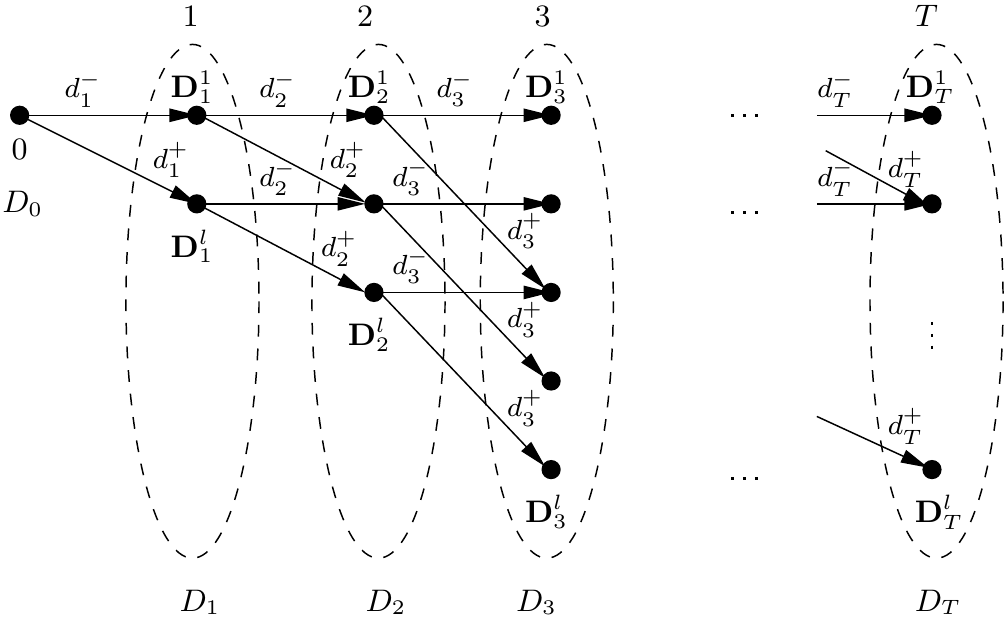}
	\caption{An example of the constructed network in Step~1.}\label{fig1}
\end{figure}

The network in Step~1 can be built in $O(T\cdot \max_{t=1,\ldots, T} |\mathbb{D}_{t}|)$ time.
The running time of Step~2 is  the same time as Step~1.
Step~3 can be done in $O(T)$ time. Hence, the overall running time of
the improved algorithm is $O(T\cdot \max_{t=1,\ldots, T} |\mathbb{D}_{t}|)$.
It is easily seen that $\max_{t=1,\ldots, T} |\mathbb{D}_{t}|$ is upper 
bounded by $\mathbf{D}_T(S^{+})$ and 
at the worst case $\max_{t=1,\ldots, T} |\mathbb{D}_{t}|=\mathbf{D}_T(S^{+})$.

Furthermore, the running time can be reduced if
$d^{+}_1-d^{-}_1=\cdots=d^{+}_T-d^{-}_T=h$.
Then we find 
$\mathbb{D}_t=\{\mathbf{D}_t(S^{-}), \mathbf{D}_t(S^{-})+h,\ldots,
 \mathbf{D}_t(S^{-})+t h\}$, $t=1,\ldots,T$.
 Now the running time is $O(T^{2})$, which is polynomial.

\subsection{Solving the Robust Problem}
\label{ssrp}

Let us consider the problem~\textsc{ROB} with no capacity
limits, i.e. the problem with the set 
$\Xset=\{(x_1,\ldots,x_{T})\;:\;
x_t\geq 0, t=1,\ldots,T\}$.
In this case, we make  the assumption: 
the costs of carrying one unit of 
inventory from period~$t$ to period~$t+1$ 
for every $t=1,\ldots,T$
are equal,
we denote it by $c^I$  and
the costs  of backordering one unit 
from period~$t+1$ to period~$t$ for every $t=1,\ldots,T$
are equal,
we denote it by $c^B$. 
 Note that
function
$F(\pmb{x},S)$ is continuous on $\Xset$ and $\Gamma$,
$\Gamma$ is a closed bounded set, and so $A(\pmb{x})$ is 
well  defined continuous function on $\Xset$ (see, e.g.,~\cite[Theorem~1.4]{M70}).
We show that 
an optimal robust production plan $\hat{\pmb{x}}=(\hat{x}_1,\ldots,\hat{x}_{T})$, i.e. $\hat{\pmb{x}}=
\text{arg }\min_{\pmb{x}\in\Xset}A(\pmb{x})$,
exists and 
can be
computed by the following formulae:
\begin{equation}
\renewcommand{\arraystretch}{1.6}
\begin{array}{lll}
\hat{\mathbf{X}}_1:=\frac{c^{B} \mathbf{D}_1(S^{+})+c^{I}\mathbf{D}_1(S^{-})}{c^{B}+c^{I}},
& \hat{x}_1:=\hat{\mathbf{X}}_1&\\
\hat{\mathbf{X}}_t:=\frac{c^{B} \mathbf{D}_t(S^{+})+c^I\mathbf{D}_t(S^{-})}{c^{B}+c^{I}},& \hat{x}_t:=\hat{\mathbf{X}}_t-\hat{\mathbf{X}}_{t-1},& t\geq 2.
\end{array}
\label{adopuc}
\end{equation}
An algorithm for determining a production plan~$\hat{\pmb{x}}$ according to (\ref{adopuc})
can be implemented in $O(T)$ time. 
Before we show that $\hat{\pmb{x}}$ is optimal to  problem~\textsc{ROB}
(an optimal robust production plan)
 with no capacity
limits we prove the following proposition.
\begin{prop}
Let $\hat{\pmb{x}}$ be a plan computed according to (\ref{adopuc}). Then
$\hat{\pmb{x}}$ is feasible and
$S^{-}$ and $S^{+}$ are the worst case scenarios for 
plan~$\hat{\pmb{x}}$, i.e.
$A(\hat{\pmb{x}})=F(\hat{\pmb{x}},S^{-})=F(\hat{\pmb{x}},S^{+})$.
\label{pwcs}
\end{prop}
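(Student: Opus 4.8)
The plan is to verify the two assertions in turn---feasibility first and then the worst-case characterization---the whole argument resting on a single algebraic observation about the formula~(\ref{adopuc}).

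First I would dispatch feasibility. Since $\hat{x}_t=\hat{\mathbf{X}}_t-\hat{\mathbf{X}}_{t-1}$ for $t\geq 2$ (and $\hat{x}_1=\hat{\mathbf{X}}_1$), substituting~(\ref{adopuc}) and using $\mathbf{D}_t(S^{+})-\mathbf{D}_{t-1}(S^{+})=d^{+}_t$ and $\mathbf{D}_t(S^{-})-\mathbf{D}_{t-1}(S^{-})=d^{-}_t$ gives $\hat{x}_t=(c^{B}d^{+}_t+c^{I}d^{-}_t)/(c^{B}+c^{I})\geq 0$, so $\hat{\pmb{x}}\in\Xset$.

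The heart of the proof is a per-period balancing identity. Note that $\hat{\mathbf{X}}_t$ from~(\ref{adopuc}) is the convex combination $\tfrac{c^{I}}{c^{B}+c^{I}}\mathbf{D}_t(S^{-})+\tfrac{c^{B}}{c^{B}+c^{I}}\mathbf{D}_t(S^{+})$, hence $\mathbf{D}_t(S^{-})\leq\hat{\mathbf{X}}_t\leq\mathbf{D}_t(S^{+})$. A direct computation then shows
\begin{align*}
L_t(\hat{\mathbf{X}}_t,\mathbf{D}_t(S^{-}))
&=c^{I}(\hat{\mathbf{X}}_t-\mathbf{D}_t(S^{-}))
=\frac{c^{I}c^{B}}{c^{B}+c^{I}}\bigl(\mathbf{D}_t(S^{+})-\mathbf{D}_t(S^{-})\bigr)\\
&=c^{B}(\mathbf{D}_t(S^{+})-\hat{\mathbf{X}}_t)
=L_t(\hat{\mathbf{X}}_t,\mathbf{D}_t(S^{+})),
\end{align*}
so the cost incurred at the two extreme cumulative demand levels coincides in every period; summing over $t$ yields $F(\hat{\pmb{x}},S^{-})=F(\hat{\pmb{x}},S^{+})$. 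It remains to show this common value equals the worst-case cost $A(\hat{\pmb{x}})$. For each period write $g_t(\mathbf{D})=\max\{c^{I}(\hat{\mathbf{X}}_t-\mathbf{D}),c^{B}(\mathbf{D}-\hat{\mathbf{X}}_t)\}$, so that $F(\hat{\pmb{x}},S)=\sum_{t}g_t(\mathbf{D}_t(S))$ and, for any $S\in\Gamma$, $\mathbf{D}_t(S)\in[\mathbf{D}_t(S^{-}),\mathbf{D}_t(S^{+})]$. Each $g_t$ is a convex (V-shaped) function of one variable, so over the closed interval $[\mathbf{D}_t(S^{-}),\mathbf{D}_t(S^{+})]$ it attains its maximum at an endpoint, and by the balancing identity both endpoints yield the same, maximal, value. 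Hence $g_t(\mathbf{D}_t(S))\leq g_t(\mathbf{D}_t(S^{+}))$ termwise for every $S$, and summing gives $F(\hat{\pmb{x}},S)\leq F(\hat{\pmb{x}},S^{+})$ for all $S\in\Gamma$, whence $A(\hat{\pmb{x}})=F(\hat{\pmb{x}},S^{+})=F(\hat{\pmb{x}},S^{-})$.

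I expect the only genuinely delicate point to be the termwise maximization: although the cumulative demands $\mathbf{D}_t(S)$ across periods are coupled (they must be nondecreasing and arise from a single scenario), the additive separability of $F$ over periods, together with the fact that each term depends on $S$ only through $\mathbf{D}_t(S)$, lets me bound each term independently, and the bound is attained simultaneously at $S^{-}$ and $S^{+}$ precisely because~(\ref{adopuc}) equalizes the endpoint costs. This route also makes the invocation of Proposition~\ref{pext} unnecessary here, although the conclusion is consistent with it.
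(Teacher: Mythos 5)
Your proposal is correct and takes essentially the same route as the paper's own proof: feasibility from $\hat{x}_t=(c^{B}d^{+}_t+c^{I}d^{-}_t)/(c^{B}+c^{I})\geq 0$, the per-period balancing identity $c^{I}(\hat{\mathbf{X}}_t-\mathbf{D}_t(S^{-}))=c^{B}(\mathbf{D}_t(S^{+})-\hat{\mathbf{X}}_t)$, and a termwise bound of each period's cost over $[\mathbf{D}_t(S^{-}),\mathbf{D}_t(S^{+})]$ followed by summation. The only cosmetic difference is that you justify the termwise bound by endpoint maximization of a convex one-variable function, whereas the paper bounds the two branches of the max directly by monotonicity of $c^{I}(\hat{\mathbf{X}}_t-\mathbf{D})$ and $c^{B}(\mathbf{D}-\hat{\mathbf{X}}_t)$ in $\mathbf{D}$.
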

\begin{proof}
See Appendix~\ref{dod}.
\end{proof}
We are now ready to prove 
that $\hat{\pmb{x}}$ is an optimal robust production plan.
\begin{thm}
A production plan determined by formulae~(\ref{adopuc}) is
an optimal one for problem~\textsc{ROB} with no capacity
limits.
\label{tncrob}
\end{thm}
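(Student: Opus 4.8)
The plan is to prove optimality of $\hat{\pmb{x}}$ by exhibiting a lower bound on $A(\pmb{x})$, valid for every feasible plan $\pmb{x}\in\Xset$, that is attained exactly by $\hat{\pmb{x}}$. First I would record, using Proposition~\ref{pwcs}, the worst-case cost of $\hat{\pmb{x}}$ in closed form. Substituting $\hat{\mathbf{X}}_t$ from~(\ref{adopuc}) into $L_t$ and using $\hat{\mathbf{X}}_t\in[\mathbf{D}_t(S^{-}),\mathbf{D}_t(S^{+})]$ (since $\hat{\mathbf{X}}_t$ is a convex combination of the two cumulative bounds), one gets $L_t(\hat{\mathbf{X}}_t,\mathbf{D}_t(S^{+}))=L_t(\hat{\mathbf{X}}_t,\mathbf{D}_t(S^{-}))=\frac{c^{I}c^{B}}{c^{I}+c^{B}}(\mathbf{D}_t(S^{+})-\mathbf{D}_t(S^{-}))$, whence
\[
A(\hat{\pmb{x}})=F(\hat{\pmb{x}},S^{-})=F(\hat{\pmb{x}},S^{+})=\sum_{t=1}^{T}\frac{c^{I}c^{B}}{c^{I}+c^{B}}\bigl(\mathbf{D}_t(S^{+})-\mathbf{D}_t(S^{-})\bigr).
\]

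Next, for an arbitrary $\pmb{x}\in\Xset$, since $S^{+},S^{-}\in\Gamma$ we have $A(\pmb{x})\geq F(\pmb{x},S^{+})$ and $A(\pmb{x})\geq F(\pmb{x},S^{-})$, so $A(\pmb{x})$ dominates any convex combination of these two costs. The crux of the argument is the choice of weights: I would take
\[
A(\pmb{x})\geq \frac{c^{I}}{c^{I}+c^{B}}\,F(\pmb{x},S^{+})+\frac{c^{B}}{c^{I}+c^{B}}\,F(\pmb{x},S^{-})=\sum_{t=1}^{T}h_t(\mathbf{X}_t),
\]
where $h_t(y)=\frac{c^{I}}{c^{I}+c^{B}}L_t(y,\mathbf{D}_t(S^{+}))+\frac{c^{B}}{c^{I}+c^{B}}L_t(y,\mathbf{D}_t(S^{-}))$. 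The decomposition into a sum over periods is possible precisely because the same weights are used in every period.

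The main obstacle, and the heart of the proof, is the per-period estimate $h_t(y)\geq \frac{c^{I}c^{B}}{c^{I}+c^{B}}(\mathbf{D}_t(S^{+})-\mathbf{D}_t(S^{-}))$ for all $y$. I would verify it by a short case analysis on the position of $y$ relative to $[\mathbf{D}_t(S^{-}),\mathbf{D}_t(S^{+})]$. For $y$ inside this interval, $L_t(y,\mathbf{D}_t(S^{+}))=c^{B}(\mathbf{D}_t(S^{+})-y)$ and $L_t(y,\mathbf{D}_t(S^{-}))=c^{I}(y-\mathbf{D}_t(S^{-}))$, and the weighted linear pieces cancel so that $h_t(y)$ is \emph{constant}, equal to $\frac{c^{I}c^{B}}{c^{I}+c^{B}}(\mathbf{D}_t(S^{+})-\mathbf{D}_t(S^{-}))$; these particular weights were chosen exactly to force this cancellation. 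For $y$ outside the interval, $h_t$ is piecewise linear, convex, and increasing away from the interval, so its value only grows. Thus $h_t$ attains its minimum on the whole interval $[\mathbf{D}_t(S^{-}),\mathbf{D}_t(S^{+})]$, in which $\hat{\mathbf{X}}_t$ lies.

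Summing the per-period bounds over $t$ yields $A(\pmb{x})\geq \sum_{t=1}^{T}h_t(\mathbf{X}_t)\geq A(\hat{\pmb{x}})$ for every $\pmb{x}\in\Xset$. Combined with the feasibility of $\hat{\pmb{x}}$ guaranteed by Proposition~\ref{pwcs}, this shows $\hat{\pmb{x}}$ minimizes $A$ over $\Xset$, i.e.\ it is an optimal robust production plan for problem~\textsc{ROB} with no capacity limits. The only genuinely delicate point is identifying the correct weights $c^{I}/(c^{I}+c^{B})$ and $c^{B}/(c^{I}+c^{B})$ (note they pair oppositely to the weights defining $\hat{\mathbf{X}}_t$) and observing that they render $h_t$ flat on the relevant interval; everything else is routine.
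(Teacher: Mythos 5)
Your proof is correct, and it takes a genuinely different route from the paper's. The paper argues by an exchange/clamping step: given an arbitrary plan $\pmb{x}'$, it first projects each cumulative level $\mathbf{X}'_t$ onto $[\mathbf{D}_t(S^{-}),\mathbf{D}_t(S^{+})]$ to get a feasible $\pmb{x}''$ with $A(\pmb{x}')\geq A(\pmb{x}'')$, and then, on that box, uses the fact that $F(\cdot,S^{-})$ is a sum of increasing linear functions of the $\mathbf{X}_t$ while $F(\cdot,S^{+})$ is a sum of decreasing ones, whose per-period intersection points are exactly the $\hat{\mathbf{X}}_t$, to conclude $\max\{F(\pmb{x}'',S^{-}),F(\pmb{x}'',S^{+})\}\geq F(\hat{\pmb{x}},S^{-})=F(\hat{\pmb{x}},S^{+})=A(\hat{\pmb{x}})$. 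You instead produce a single global lower bound valid for every $\pmb{x}\in\Xset$ at once: the convex combination $\frac{c^{I}}{c^{I}+c^{B}}F(\pmb{x},S^{+})+\frac{c^{B}}{c^{I}+c^{B}}F(\pmb{x},S^{-})$ decomposes into per-period functions $h_t(\mathbf{X}_t)$ that are convex, constant on $[\mathbf{D}_t(S^{-}),\mathbf{D}_t(S^{+})]$ (by the cancellation your weights force), and hence everywhere at least $\frac{c^{I}c^{B}}{c^{I}+c^{B}}(\mathbf{D}_t(S^{+})-\mathbf{D}_t(S^{-}))$, which sums to exactly $A(\hat{\pmb{x}})$. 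This buys you two things the paper leaves implicit: you never need the clamping lemma $A(\pmb{x}')\geq A(\pmb{x}'')$, which the paper asserts as ``easy to check'' without proof, and your weighted bound is precisely the missing quantitative step behind the paper's intersection-point claim (the paper's final inequality follows from the same cancellation, though it is not spelled out there). Your argument is thus more self-contained, acts as a duality-style certificate of optimality, and yields the closed-form optimal value $\sum_{t=1}^{T}\frac{c^{I}c^{B}}{c^{I}+c^{B}}\bigl(\mathbf{D}_t(S^{+})-\mathbf{D}_t(S^{-})\bigr)$ as a byproduct; the paper's approach, in exchange, makes the geometric picture (projection onto the demand box, crossing of increasing and decreasing cost curves) more visible.
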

\begin{proof}
See Appendix~\ref{dod}.
\end{proof}
Note that 
if   an initial backorder~$B_0$ or an initial inventory~$I_0$
are not equal to zero then one can modify the interval demand $D_1$ as follows:
$D_1:=[d^{-}_1+B_0, d^{+}_1+B_0]$  if $B_0>0$ or
$D_1:=[d^{-}_1-I_0, d^{+}_1-I_0]$ if $0<I_0\leq d^{-}_1$,
and apply 
formulae~(\ref{adopuc}) to determine an optimal robust production plan for  problem~\textsc{ROB}
with the modified demand.
If $I_0> d^{-}_1$ then one appends period~$0$, as it has been described in Section~\ref{sdlsp},
and  applies an algorithm (Algorithm~\ref{aorp}) for the case with  capacity
limits, $l_t=0$, $u_t=M$, $t=1,\ldots,T$, where $M$ is a large number.

Let us turn to  the problem~\textsc{ROB} with  capacity
limits, i.e. the problem with the set 
$\Xset=\{(x_1,\ldots,x_{T})\;:\;
l_t\leq x_t\leq u_t, t=1,\ldots,T\}$.
Notice $\Gamma$ is a bounded closed set.
Function $F(\pmb{x},S)$ is continuous on 
$\Xset$ and $\Gamma$ and hence $A(\pmb{x})$ is continuous
function on $\Xset$ (see, e.g.,~\cite[Theorem~1.4]{M70}). From this and  the fact
$\Xset$ is a bounded closed set it follows that 
$A(\pmb{x})$ attains its minimum on $\Xset$.

We now construct an iterative algorithm for solving problem~\textsc{ROB} based on
on iterative relaxation  scheme for min-max problems proposed in~\cite{SA80}.
Similar methods were developed for min-max regret linear programming problems
with an interval objective function~\cite{IS95,ML99}.
Let us consider the problem (\textsc{RX-ROB})
being a relaxation of  problem~\textsc{ROB}  that consists in 
replacing a given scenario set~$\Gamma$ with a discrete scenario set
$\Gamma_{\text{dis}}=\{S^1,\ldots,S^K\}$, $\Gamma_{\text{dis}}\subseteq \Gamma$:
\begin{equation}
 \begin{array}{rll}
  \textsc{RX-ROB:}&a^{*}=\min a&\\
  \text{s.t. }&a\geq F(\pmb{x},S^{k})&\forall S^k\in \Gamma_{\text{dis}},\\
  &\pmb{x}\in \Xset,&
 \end{array}
 \label{relrob}
\end{equation}
where $S^{k}=(s^{k}_t)_{t=1}^{T}$.
The constraint 
$a\geq F(\pmb{x},S^{k})$,
called \emph{scenario cut}, is associated with exactly one scenario~$S^k\in \Gamma_{\text{dis}}$.
Since $\Gamma_{\text{dis}}\subseteq \Gamma$, the maximal cost~$a^{*}$ of
an optimal solution $\pmb{x}^{*}$ of problem~\textsc{RX-ROB} 
over discrete scenario set~$\Gamma_{\text{dis}}$
is a lower bound on
the maximal cost of an optimal robust production plan~$\pmb{x}^{r}$ for
problem~\textsc{ROB}, i.e. $a^{*}\leq A(\pmb{x}^{r})$.
Note that the scenario cut, $a\geq F(\pmb{x},S^{k})$,
associated with~$S^{k}$
 is not a linear constraint. One can linearize the cut
by replacing it in \textsc{RX-ROB} with the following $T+1$ constraints 
and $2T$ new decision variables:
\[
 \begin{array}{ll}
a\geq \sum_{t=1}^{T}(c^I_t I^{S^{k}}_t+ c^B_t B^{S^{k}}_t),&\\
B^{S^{k}}_t- I^{S^{k}}_t=\sum_{j=1}^{t}(s^{k}_j-x_j),&t=1,\ldots,T, \\
         B^{S^{k}}_t,I^{S^{k}}_t\geq 0,&t=1,\ldots,T.
 \end{array}
\]

Our algorithm (Algorithm~\ref{aorp}) starts with zero 
lower bound on the maximal cost of an optimal robust production plan~$\pmb{x}^{r}$, $LB=0$,
a candidate $\pmb{x}^{*}\in\Xset$ for an optimal solution for \textsc{ROB} and
empty discrete scenario set, $\Gamma_{\text{dis}}=\emptyset$. 
At each iteration, a worst case scenario $S^{w}$ 
for $\pmb{x}^{*}$ is computed by applying the method~(\ref{imostnec}) 
or the dynamic programming based algorithm presented in Section~\ref{sepp}.
Clearly, $A(\pmb{x}^{*})=F(\pmb{x}^{*},S^{w})$ is an upper bound on $A(\pmb{x}^{r})$,
$A(\pmb{x}^{r})\leq A(\pmb{x}^{*})$. If a termination criterion is fulfilled (usually
$(A(\pmb{x}^{*})-LB)/LB\leq \epsilon$ if $LB>1$; $A(\pmb{x}^{*})-LB\leq \epsilon$
otherwise, $\epsilon>0$ is a given tolerance) then algorithm stops with 
production plan~$\pmb{x}^{*}$, which is an approximation of 
an optimal robust production plan.
Otherwise
the worst case scenario $S^{w}=(s^{w}_t)_{t=1}^{T}$ is added to $\Gamma_{\text{dis}}$,
the corresponding to~$S^{w}$ scenario cut  
is appended to
problem~\textsc{RX-ROB}. Next
the updated linear programming problem~\textsc{RX-ROB} 
 is solved to obtain a better 
candidate~$\pmb{x}^{*}$ for an optimal solution for \textsc{ROB} and
new lower bound $LB=a^{*}$. Since set $\Gamma_{\text{dis}}$ is 
updated during the course of the algorithm,
the computed values of lower bounds are nondecreasing sequence of their values.
 Then new iteration is started. 
  \begin{algorithm}
  \KwIn{ Interval demands $D_{t}=[d^{-}_t, d^{+}_t]$, costs~$c^I_t$, $c^B_t$,
  $t=1,\ldots T$,
   initial production plan~$\pmb{x}^{*}\in \Xset$,
  a convergence tolerance parameter~$\epsilon>0$.}
  \KwOut{A production plan~$\hat{\pmb{x}}^{r}$, an approximation of 
  an optimal robust production plan, and its worst case scenario~$S^{w}$.}

\KwStep~0. $k:=0$, $LB:=0$, $\Gamma_{\text{dis}}:=\emptyset$.\\
\KwStep~1. $\pmb{x}^{k}:=\pmb{x}^{*}$.\\ 
\KwStep~2. Compute a worst case scenario $S^{w}$ 
                    for $\pmb{x}^{k}$ by applying the method~(\ref{imostnec}) 
                    or the dynamic programming based algorithm presented in Section~\ref{sepp}.\\                   
\KwStep~3. $\Delta:=F(\pmb{x}^{k}, S^{w})- LB$. If $LB>1$ then $\Delta:=\Delta/LB$. 

 If $\Delta \leq \epsilon$ then output $\pmb{x}^{k}, S^{w}$ and STOP.\\	
\KwStep~4.  $k:=k+1$.\\		
\KwStep~5. $S^{k}:=S^{w}$, $\Gamma_{\text{dis}}:=\Gamma_{\text{dis}}\cup \{S^{k}\}$ and
     append scenario   cut $a\geq F(\pmb{x},S^{k})$
      to
     problem~\textsc{RX-ROB}.\\
\KwStep~6. Compute an optimal solution~$(\pmb{x}^{*},a^{*})$ for~\textsc{RX-ROB},
                   $LB:=a^{*}$, and go to Step~1.               
  \caption{Solving problem~\textsc{ROB}.}
  \label{aorp}
\end{algorithm}

In order to choose a good initial production plan~$\pmb{x}^{*}\in \Xset$ in Algorithm~\ref{aorp},
we suggest to solve the classical production planning problem (\ref{ddls}) with capacity limits 
(the model~(\ref{ddlsf})) under the midpoint demand scenario~$S^{m}$,
i.e. $d_t(S^{m})=(d^{-}_t+d^{+}_t)/2$, $t=1,\ldots,T$ and take an optimal production plan under the
midpoint scenario as an  initial production plan.
\begin{thm}
Algorithm~\ref{aorp} terminates in a finite number of steps for any given~$\epsilon>0$.
\label{tconv}
\end{thm}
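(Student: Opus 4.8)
The plan is to show that the cutting-plane scheme of Algorithm~\ref{aorp} can append only finitely many scenario cuts, because every cut is generated by a \emph{distinct} extreme scenario and there are only $|\Gamma_{\text{ext}}|=2^{T}$ of those. So the argument is essentially a finiteness-of-$\Gamma_{\text{ext}}$ argument dressed up with the observation that no extreme scenario is ever reused.

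First I would record what the relaxation delivers at a generic iteration. When an iteration begins with $\Gamma_{\text{dis}}=\{S^1,\dots,S^k\}$, Step~6 has produced an optimal solution $(\pmb{x}^{k},a^{*})$ of \textsc{RX-ROB} and set $LB=a^{*}$; here I use that $\pmb{x}^{k}$ is exactly the plan $\pmb{x}^{*}$ reassigned in Step~1. Because \textsc{RX-ROB} minimises $a$ subject to $a\geq F(\pmb{x},S^{j})$ for every $S^{j}\in\Gamma_{\text{dis}}$, optimality forces $a^{*}=\max_{1\leq j\leq k}F(\pmb{x}^{k},S^{j})$ (otherwise $a$ could be lowered while remaining feasible), so in particular $LB\geq F(\pmb{x}^{k},S^{j})$ for all $j\leq k$. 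Next, Step~2 computes a worst case scenario $S^{w}$ for $\pmb{x}^{k}$, whence $F(\pmb{x}^{k},S^{w})=A(\pmb{x}^{k})=\max_{S\in\Gamma}F(\pmb{x}^{k},S)$, and by Proposition~\ref{pext} this scenario may be taken in $\Gamma_{\text{ext}}$ (the dynamic programming procedure of Section~\ref{sepp} returns such a vertex by construction).

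The heart of the argument is the step showing that non-termination strictly increases the number of scenarios. Suppose the stopping test in Step~3 fails, i.e.\ $\Delta>\epsilon$. In the absolute regime $\Delta=F(\pmb{x}^{k},S^{w})-LB$, while in the relative regime $\Delta=(F(\pmb{x}^{k},S^{w})-LB)/LB$ with $LB>1$; since $\epsilon>0$ and $LB>0$ in the relative case, failure of the test yields in either branch $F(\pmb{x}^{k},S^{w})-LB>0$, hence $F(\pmb{x}^{k},S^{w})>LB\geq F(\pmb{x}^{k},S^{j})$ for every $j\leq k$. Consequently $S^{w}\neq S^{j}$ for all $j\leq k$, since identical scenarios would give identical costs. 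Thus the scenario $S^{k+1}:=S^{w}$ added in Step~5 is an extreme scenario not previously present in $\Gamma_{\text{dis}}$.

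Finally I would close the loop: every iteration that does not stop enlarges $\Gamma_{\text{dis}}$ by one new member of the finite set $\Gamma_{\text{ext}}$, and $\Gamma_{\text{dis}}\subseteq\Gamma_{\text{ext}}$ throughout, so at most $2^{T}$ cuts can be generated and the test in Step~3 must succeed after at most $2^{T}+1$ iterations. The one point needing care — and the main, if modest, obstacle — is the strictness analysis in Step~3: one must check that \emph{both} branches of the termination criterion rule out re-selecting an already-present scenario, and that the relaxation value genuinely \emph{equals} the in-sample worst cost $\max_{j\leq k}F(\pmb{x}^{k},S^{j})$ rather than merely bounding it. Once these are verified, finiteness is immediate from the finiteness of $\Gamma_{\text{ext}}$, which in turn rests on Proposition~\ref{pext}.
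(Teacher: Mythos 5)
Your proof is correct, but it follows a genuinely different route from the paper's. The paper proves Theorem~\ref{tconv} analytically, in the style of \cite{G72} and \cite{SA80}: it extracts convergent subsequences from $\{(\pmb{x}^{k},a^{k})\}$ and $\{S^{k}\}$ using compactness of $\Xset$ and $\Gamma$, and then invokes continuity of $F$ together with upper semicontinuity of the worst-case-scenario mapping and of $A$ (via \cite{M70}) to show the stopping test must eventually be satisfied. Your argument is instead combinatorial: the relaxation value equals the in-sample worst cost $\max_{j\leq k}F(\pmb{x}^{k},S^{j})$, so failure of the test in Step~3 forces $F(\pmb{x}^{k},S^{w})>LB\geq F(\pmb{x}^{k},S^{j})$ and hence $S^{w}\notin\Gamma_{\text{dis}}$; since each generated cut comes from the finite set $\Gamma_{\text{ext}}$, at most $2^{T}$ cuts can ever be added. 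What your approach buys is an explicit iteration bound of $2^{T}+1$ and, notably, termination even for $\epsilon=0$ (your strict-inequality step nowhere uses $\epsilon>0$), neither of which the paper's topological argument provides. What the paper's approach buys is independence from extremality of the oracle's output: your counting argument needs every scenario returned in Step~2 to lie in $\Gamma_{\text{ext}}$, which the dynamic programming procedure guarantees by construction, but which the MIP~(\ref{imostnec}) does not literally guarantee --- Proposition~\ref{pext} (as proved) establishes that \emph{some} maximizer is extreme, not that every optimal MIP solution is, and in degenerate instances (e.g., some $c^{I}_t=c^{B}_t=0$) the set of worst-case scenarios can contain non-extreme points that a solver might return. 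So your proof is complete as stated only under the mild, easily enforced convention that Step~2 always selects an extreme worst-case scenario; the paper's compactness argument covers arbitrary maximizers, at the price of requiring $\epsilon>0$ and giving no bound on the number of iterations.
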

\begin{proof}
See Appendix~\ref{dod}.
\end{proof}
Note that 
if an initial inventory~$I_0$ or an initial backorder~$B_0$
are not equal to zero then one appends period~$0$, as it has been described in Section~\ref{sdlsp},
and  applies Algorithm~\ref{aorp} for $T+1$ periods.

Let us illustrate, by the following example, 
that solving  problem~\textsc{ROB}  leads to a robust  production
plan.   We are given $5$ periods with 
the production capacity limits on~a production plan:
$l_1=40$, $u_1=50$,  $l_2=30$, $u_2=40$,
$l_3=30$, $u_3=40$, $l_4=10$, $u_4=35$ and $l_5=10$, $u_5=35$.
The costs of carrying one unit
of inventory from period~$t$ to period~$t+1$, $c^I_t$, for every
$t=1,\ldots,5$  equal~$1$ and  
 the costs of
backordering one unit from period $t+1$ to period~$t$, $c^B_t$, for every
 $t=1,\ldots,5$ equal~$5$.
The knowledge about
 demands in each period is represented by the intervals: $D_1=[30,45]$, 
 $D_2=[5,15]$,  $D_3=[10,30]$,  $D_4=[20,40]$ and  $D_5=[20,40]$.
 The scenario set~$\Gamma$ (states of the world) is 
$\Gamma= [30,45]\times [5,15]\times [10,30] \times [20,40] \times [20,40]$
(see Figure~\ref{fig2}).
The execution of Algorithm~\ref{aorp} gives a  production plan:
$x^{\text{opt}}_1=40$, $x^{\text{opt}}_2=30$,  $x^{\text{opt}}_3=30$,
$x^{\text{opt}}_4=27.9167$,  $x^{\text{opt}}_5=10$ with the total cost of~$215.833$
($\pmb{x}^{\text{opt}}$ is an approximation of  an optimal robust production plan with 
convergence tolerance parameter~$\epsilon=0.0001$, 
the maximal cost of~$\pmb{x}^{\text{opt}}$ is no more than $0.01\%$
from optimality).
The worst case scenario $S^{w}\in \Gamma$ is $d_1(S^{w})=30$, 
$d_2(S^{w})=5$, $d_3(S^{w})=10$, $d_4(S^{w})=20$, $d_5(S^{w})=20$
 and
$\max_{S\in \Gamma}F(\pmb{x}^{\text{opt}},S)=F(\pmb{x}^{\text{opt}},S^{w})=215.833$
(see Figure~\ref{fig2}).
This means that  total costs of production plan~$\pmb{x}^{\text{opt}}$ 
do not exceed the value of $215.833$ over the set of scenarios.
Moreover,  the plan~$\pmb{x}^{\text{opt}}$ has the best worst performance, i.e.
it minimizes the total cost over the all scenarios. 
Additionally, making use of the methods presented in Section~\ref{sepp},
one gets  complete information about all
 possible values of costs of the production plan~$\pmb{x}^{\text{opt}}$
 over the set of scenarios~$\Gamma$,
 by determining 
the optimal interval
 $F_{\pmb{x}^{\text{opt}}}=[f_{\pmb{x}^{\text{opt}}}^{-},f_{\pmb{x}^{\text{opt}}}^{+}]$  
 that contains
 these values (see (\ref{icost})). This interval equals $[40, 215.833]$.
A~popular approach for solving a problem with uncertain  parameters
 modeled by the classical intervals is taking the midpoints of the intervals (the average values of
 the possible parameter values)
 and solving the problem with these deterministic parameters.
 In our example, the midpoint scenario has the form 
 $d_1(S^{\text{mid}})=37.5$, 
$d_2(S^{\text{mid}})=10$, $d_3(S^{\text{mid}})=20$, $d_4(S^{\text{mid}})=30$,
 $d_5(S^{\text{mid}})=30$. An algorithm for the problem with 
 the midpoint demands (see Section~\ref{sdlsp}) returns an optimal production
 plan:
 $x^{\text{mid}}_1=40$, $x^{\text{mid}}_2=30$,  $x^{\text{opt}}_3=30$,
$x^{\text{mid}}_4=10$,  $x^{\text{mid}}_5=17.5$ with the total cost of~$70$. But,
if  scenario 
$d_1(S^{w})=45$, 
$d_2(S^{w})=15$, $d_3(S^{w})=30$, $d_4(S^{w})=40$,
 $d_5(S^{w})=40$ (a worst case scenario for $\pmb{x}^{\text{mid}}$) occurs 
 then the cost will be equal to~$357.5$ ($\max_{S\in \Gamma}F(\pmb{x}^{\text{mid}},S)=357.5$).
 Note that $357.5\gg 215.833$.
Similar situation is for two extreme scenarios $S^{+}$ and $S^{-}$,  the scenarios
  in which all the demands are set to their upper bounds and  the lower bounds, respectively.
  Again running an algorithm for the crisp problem with the demands under scenario~$S^{+}$
  and $S^{-}$, we obtain optimal solutions:
   $x^{+}_1=45$, $x^{+}_2=30$,  $x^{+}_3=30$,
$x^{+}_4=30$,  $x^{+}_5=35$ with the total cost of~$35$ under $S^{+}$
and
   $x^{-}_1=40$, $x^{-}_2=30$,  $x^{-}_3=30$,
$x^{-}_4=10$,  $x^{-}_5=10$, under $S^{-}$,
with the total cost of~$180$. It turns out that
if  scenario 
$d_1(S^{w})=30$, 
$d_2(S^{w})=5$, $d_3(S^{w})=10$, $d_4(S^{w})=20$,
 $d_5(S^{w})=20$, $S^w\in\Gamma$, (a worst case scenario for $\pmb{x}^{+}$) occurs
 then  the cost of~ $\pmb{x}^{+}$ will be equal to~$270$. Similarly,
 if 
$d_1(S^{w})=45$, 
$d_2(S^{w})=15$, $d_3(S^{w})=30$, $d_4(S^{w})=40$,
 $d_5(S^{w})=40$, $S^w\in\Gamma$,  (a worst case scenario for $\pmb{x}^{-}$) occurs
 then  the cost of~ $\pmb{x}^{-}$ will be equal to~$395$. 
 The  summary of the input and output data  of the above illustrative example
 is shown in Table~\ref{tabsum1}.
 Accordingly, we have no doubts that
the computed plan~$\pmb{x}^{\text{opt}}$ with respect to the
min-max criterion  (problem~\textsc{ROB}) is a robust one. 
\begin{figure}
\centering
            \includegraphics{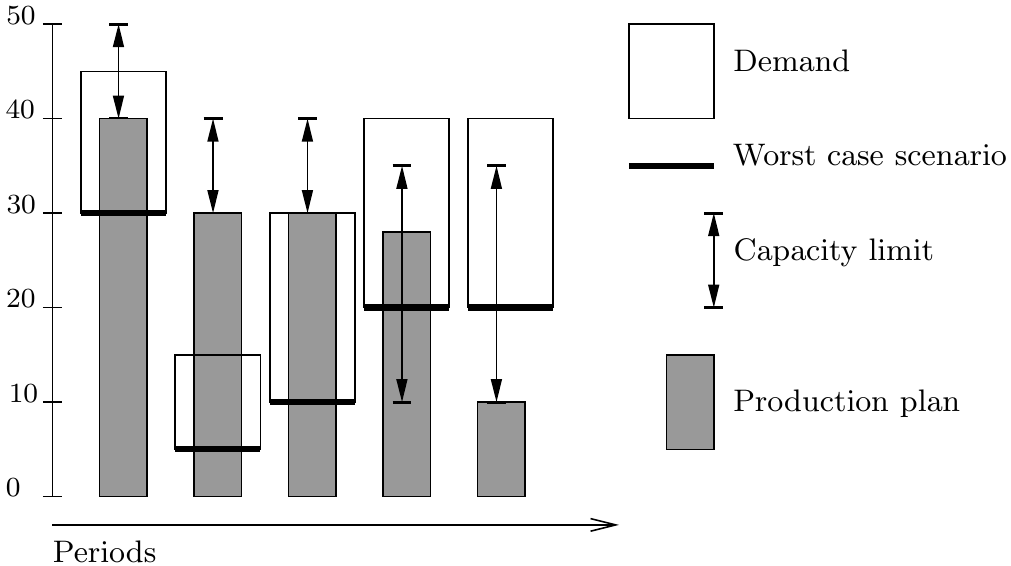}
	\caption{The computed  production plan~$\pmb{x}^{\text{opt}}$ in the
	illustrative example.}\label{fig2}
\end{figure}

\begin{table}
\begin{small}
\setlength{\tabcolsep}{3pt}
\caption{Summary of the data  and results of the illustrative example} \label{tabsum1}
\begin{center}
  \begin{tabular}{cccc ||    cc  |    cc  | cc  |  cc}
            &\multicolumn{3}{c||}{data$^{\dag}$}&\multicolumn{8}{c}{results}\\
    \cline{2-12}
           &\multicolumn{2}{c}{capacity}&interval&\multicolumn{2}{c|}{robust}&\multicolumn{2}{c|}{plan} 
           &\multicolumn{2}{c|}{plan}&\multicolumn{2}{c}{plan}\\
           &\multicolumn{2}{c}{limits}&demands&\multicolumn{2}{c|}{plan}
           &\multicolumn{2}{c|}{under $S^{\text{mid}}$} &\multicolumn{2}{c|}{under $S^{+}$}&
           \multicolumn{2}{c}{under $S^{-}$}\\
     \hline      
    $t$ &$l_t$  & $u_t$ & $D_t$ & $x^{\text{opt}}_t$& $d_t(S^{w})$ & $x^{\text{mid}}_t$ & $d_t(S^{w})$ 
    & $x^{+}_t$ &  $d_t(S^{w})$ & $x^{-}_t$ & $d_t(S^{w})$ \\ 
    \hline
    1 & 40 & 50 & [30,45] & 40 & 30         & 40   & 45 & 45 & 30 & 40 & 45 \\ 
    2 & 30 & 40 & [5,15] & 30 & 5             & 30   & 15 & 30 & 5   & 30 & 15 \\ 
    3 & 30 & 40 & [10,30] & 30& 10          & 30   & 30 & 30 & 10 & 30 & 30 \\ 
    4 & 10 & 35 & [20,40] & 27.9167 & 20& 10   & 40 & 30 & 20 & 10 & 40  \\ 
    5 & 10 & 35 & [20,40] & 10 & 20         & 17.5& 40 & 35 & 20 & 10 & 40 \\ 
    \hline
    \multicolumn{4}{l ||}{\footnotesize{$^{\dag}c^I_t=1, c^B_t=5, t=1,\ldots,5$}}&
    \multicolumn{8}{c}{the worst costs for plans: $\pmb{x}^{\text{opt}}$, 
    $\pmb{x}^{\text{mid}}$, $\pmb{x}^{+}$, $\pmb{x}^{-}$ }\\
     \cline{5-12}
     \multicolumn{4}{c||}{}&\multicolumn{2}{c|}{$F(\pmb{x}^{\text{opt}},S^w)$}
     &\multicolumn{2}{c|}{$F(\pmb{x}^{\text{mid}},S^w)$}&
     \multicolumn{2}{c|}{$F(\pmb{x}^{+},S^w)$}&\multicolumn{2}{c}{$F(\pmb{x}^{-},S^w)$}\\
      \cline{5-12}
     \multicolumn{4}{c||}{}&\multicolumn{2}{c|}{215.833}
     &\multicolumn{2}{c|}{357.5}&
     \multicolumn{2}{c|}{270}&\multicolumn{2}{c}{395}\\  
  \end{tabular}
\end{center}
\end{small}
\end{table}

In order to check the efficiency of Algorithm~\ref{aorp},  we  performed some computational tests.
For every $T=100, 200, \ldots, 1000$, ten instances of the problem~\textsc{ROB} 
 with capacity limits
were generated.
 In every instance, inventory costs were randomly chosen from the set $\{1, 2, \ldots, 10\}$, backorder costs were randomly chosen from the set $\{20, 21, \ldots, 50\}$, the demands  and the production capacities
 were randomly generated  intervals $[X, Y]$, where $X$ is an integer-valued  random variable uniformly
 distributed in $\{0, 1, \ldots, 99\}$ and $Y$ is an integer-valued random variable uniformly
  distributed in $\{100, 101, \ldots, 199\}$. 
 To solve   the generated instances, we used  IBM ILOG CPLEX 12.2 library 
 (parallel using up 2 threads)~\cite{CPLEX} 
 and a
 computer equipped with Intel Core 2 Duo 2.5 GHz.
In Table~\ref{tab1new}, minimal, average and maximal computation times in seconds,
required to find approximations of  optimal robust production plans with 
convergence tolerance parameter~$\epsilon=0.0001$,
are presented. Thus,
 the maximal costs of the computed production plans are no more than $0.01\%$
from optimality.
All computations finished in a few iterations and about  $98\%$ of the total running time was spent on computing  worst case scenarios by MIP model (\ref{imostnec}).
We also  implemented and ran the improved dynamic programming based algorithm 
for computing  worst case scenarios presented in Section~\ref{sepp}, but
it turned out that solving MIP model for determining worst case scenarios
was much faster than  computing them
 by the dynamic programming algorithm.
As we can see from the obtained results, Algorithm~\ref{aorp} allows us to solve quite large problems having up to 1000 periods in reasonable time.

\begin{table}
\begin{small}
  \setlength{\tabcolsep}{4pt}
\caption{Minimal, average and maximal computation times in seconds} 
\label{tab1new}
\begin{center}
\begin{tabular}{r|rrrrrrrrrr}
T & 100 &  200 & 300 & 400 & 500 &
 600 &  700 & 800 & 900 & 1000\\
\hline 
min & 0.25 & 0.63 & 0.96 & 2.04 & 2.64 
               & 23.55 & 9.64 & 24.54 & 61.26 &  41.44  \\ 
avg & 0.39 & 0.84 & 5.68 & 9.18 & 11.47 
                & 61.44 & 72.20 & 180.78  & 282.20 &  372.58 \\ 
max &0.65 & 1.05 & 13.40 & 20.47 & 21.78 
                & 147.32 & 188.36 & 328.61 & 573.26 & 893.01 
\end{tabular}
\end{center}
\end{small}
\end{table}

\section{Fuzzy Problem}
\label{sfp}

In this section, we apply a more elaborate approach to model 
uncertain demands. 
Namely, 
the  uncertain demands, in problem~(\ref{ddls}), are
modeled by fuzzy intervals $\widetilde{D}_t$, $t=1,\ldots,T$.
Here,
a membership function of~$\widetilde{D}_t$ is regarded as
a possibility distribution
for the values of the unknown demand~$d_t$ (see Section~\ref{ssnpt}).
The possibility degree of the assignment 
$d_t=s$ is
$\Pi(d_t=s)=
\pi_{d_t}(s)=\mu_{\widetilde{D}_t}(s)$.
Let $S=(s_t)_{t=1}^T$
be a scenario that represents a  state of the world
where
$d_t=s_t$, for $t=1,\ldots,T$.
It is assumed that 
the demands are unrelated  one to each other. 
Hence,
the 
possibility distributions
 associated with the demands induce the following possibility distribution over all scenarios in $S\in \Rset^T$ (see~\cite{DFG03}):
\begin{equation}
\pi(S)=\Pi((d_1=s_1)\wedge\cdots\wedge (d_T=s_T))
 =\min_{t=1,\ldots,T}\Pi(d_t=s_t)
 =\min_{t=1,\ldots,T}\mu_{\widetilde{D}_t}(s_t).\label{pscen}
 \end{equation}
 The value of $\pi(S)$ stands for the possibility of the event that 
 scenario~$S\in \Rset^T$ will occur.
 We have thus extended scenario set~$\Gamma$ given by the intervals  
 (see Section~\ref{srob}) to the fuzzy case and now
 $\widetilde{\Gamma}$ is a fuzzy
 set of scenarios with membership function
 $\mu_{\widetilde{\Gamma}}(S)=\pi(S)$, $S\in\Rset^T$.
We see at once that the  $\lambda$-cuts of $\widetilde{\Gamma}$  for every $\lambda\in (0,1]$
fulfill the following equality:
\[
\widetilde{\Gamma}^{[\lambda]}=
\{S:\; \pi(S)\geq \lambda\}=[d^{-[\lambda]}_1,d^{+[\lambda]}_1]\times\cdots\times
 [d^{-[\lambda]}_T,d^{+[\lambda]}_T],
\]
which is from~(\ref{pscen}) and the definition of $\lambda$-cut.
We also define $\widetilde{\Gamma}^{[0]}
=[d^{-[0]}_1,d^{+[0]}_1]\times\cdots\times
 [d^{-[0]}_T,d^{+[0]}_T]$.
Notice that $\widetilde{\Gamma}^\lambda$, $\lambda \in [0,1]$, is the classical scenario set containing all scenarios whose possibility of occurrence is not less than~$\lambda$.

\subsection{Evaluating  Production Plan}

In order to  choose a reasonable production plan under fuzziness, we 
first show how to evaluate a given production plan~$\pmb{x}\in \Xset$.
Notice that a cost of  production plan~$\pmb{x}$ is unknown quantity, denoted by $f_{\pmb{x}}$, since demands are unknown and
modeled by fuzzy intervals in the setting of possibility theory.
Thus, the unknown cost~$f_{\pmb{x}}$ falls within fuzzy interval~$\widetilde{F}_{\pmb{x}}$, called
\emph{fuzzy cost} of plan~$\pmb{x}$, whose 
 membership function~$\mu_{\widetilde{F}_{\pmb{x}}}$
is a  possibility distribution for the 
  values of  fuzzy variable~$f_{\pmb{x}}$
  (unknown cost of~$\pmb{x}$), $\pi_{\widetilde{F}_{\pmb{x}}}=\mu_{\widetilde{F}_{\pmb{x}}}$,
defined as follows:
\begin{equation}
\mu_{\widetilde{F}_{\pmb{x}}}(v)=\Pi(f_{\pmb{x}}=v)=
\sup_{\{S: \;F(\pmb{x},S)=v\}}\pi(S), \;\;v\in\Rset. \label{dfc}
\end{equation}
Making use of (\ref{dfc}), we can define \emph{degrees of possibility and
necessity} that a cost of 
a given plan  $\pmb{x}\in \Xset$  does not exceed a given threshold~$g$:
\begin{align}
\Pi(f_{\pmb{x}}\leq g)&=\sup_{v\leq g}\pi_{\widetilde{F}_{\pmb{x}}}(v)=
\sup_{\{S: \;F(\pmb{x},S)\leq g\}}\pi(S),\label{pdx}\\
\mathrm{N}(f_{\pmb{x}}\leq g)&= 1-\Pi(f_{\pmb{x}}> g)=
1-\sup_{v>g}\pi_{\widetilde{F}_{\pmb{x}}}(v)
=1-\sup_{\{S: \;F(\pmb{x},S)> g\}}\pi(S).\label{ndx}
\end{align}
It is easily seen that $\Pi(f_{\pmb{x}}\leq g)=\lambda$ means that there exists a scenario~$S$
such that $\pi(S)=\lambda$ in which the cost of plan $\pmb{x}$ 
does not exceed threshold~$g$, $F(\pmb{x},S)\leq g$.
$\mathrm{N}(f_{\pmb{x}}\leq g)=1-\lambda$
means that for all scenarios~$S$
such that $\pi(S)>\lambda$, costs of plan $\pmb{x}$ under these scenarios 
do not exceed~$g$.

We now consider the problem of computing
the degrees~(\ref{pdx}) and (\ref{ndx})
 of a given plan~$\pmb{x}$. 
 Write $\widetilde{F}_{\pmb{x}}^{[\lambda]}=
[f^{-[\lambda]}_{\pmb{x}},f^{+[\lambda]}_{\pmb{x}}]$. Note that the interval $\widetilde{F}_{\pmb{x}}^{[\lambda]}$
($\lambda$-cut of the fuzzy cost) is the optimal 
interval of possible costs of $\pmb{x}$ (see~(\ref{icost}))
in problem~\textsc{ROB} under interval scenario set $\widetilde{\Gamma}^{[\lambda]}$.
Hence there exists a link between the interval and the fuzzy cases:
\begin{eqnarray}
\Pi(f_{\pmb{x}}\leq g)&=&
\sup\{\lambda\in [0,1]: f^{-[\lambda]}_{\pmb{x}}\leq g\},\label{pdxi}\\
\mathrm{N}(f_{\pmb{x}}\leq g)&=&1-\inf\{\lambda
	\in[0,1]:   f^{+[\lambda]}_{\pmb{x}}\leq g\}.\label{ndxi}
\end{eqnarray}
From equations  (\ref{pdxi}) and (\ref{ndxi}), we obtain methods for
computing the degrees. 
So, in order 
to compute 
$\Pi(f_{\pmb{x}}\leq g)$ (resp. $\mathrm{N}(f_{\pmb{x}}\leq g)$)
we need to find the largest  (resp. smallest) value of $\lambda$ such that
there exits a scenario $S\in \widetilde{\Gamma}^{[\lambda]}$ for which $F(\pmb{x},S)\leq g$
(resp. 
for every scenario $S\in \widetilde{\Gamma}^{[\lambda]}$ inequality $F(\pmb{x},S)\leq g$ holds),
which is equivalent to
determine an optimistic scenario $S^{o}\in \widetilde{\Gamma}^{[\lambda]}$ 
(resp. a worst 
case scenario~$S^{w}\in \widetilde{\Gamma}^{[\lambda]}$)
for $\pmb{x}$
by solving~(\ref{poptslp}) (resp.~(\ref{imostnec})) and
evaluating $F(\pmb{x},S^{o})\leq g$
(resp. $F(\pmb{x},S^{w})\leq g$). Notice $f^{-[\lambda]}_{\pmb{x}}=F(\pmb{x},S^{o})$
(resp. $f^{+[\lambda]}_{\pmb{x}}=F(\pmb{x},S^{w})$).
Since $f^{-[\lambda]}_{\pmb{x}}$  (resp. $f^{+[\lambda]}_{\pmb{x}}$) is nondecreasing 
(resp. nonincreasing)
function
of $\lambda$,
we can apply a binary search technique on $\lambda \in [0,1]$.

The fuzzy cost of production plan~$\pmb{x}$ 
(the possibility distribution for costs of~$\pmb{x}$), $\widetilde{F}_{\pmb{x}}$
can be determined approximately, if necessary,
via the use of $\lambda$-cuts. Namely, 
the optimal intervals of possible costs of $\widetilde{F}^{[\lambda]}_{\pmb{x}}=
[f^{-[\lambda]}_{\pmb{x}}, f^{+[\lambda]}_{\pmb{x}}]$ under $\widetilde{\Gamma}^{[\lambda]}$
are computed for
suitably chosen $\lambda$-cuts.
Then  fuzzy cost~$\widetilde{F}_{\pmb{x}}$ is reconstructed from their
$\lambda$-cuts. This
approach makes sense since intervals $\widetilde{F}^{[\lambda]}_{\pmb{x}}$ are nested.

\subsection{Fuzzy Robust Problem}

We now propose  two  criteria of choosing a
robust solution in the fuzzy-valued
problem~(\ref{ddls}).

We are given a threshold~$g$, and we would like to find a production plan
which maximizes 
the degree of certainty (necessity)  that its cost does not exceed  threshold~$g$. Thus, we
would like to solve the following problem:
\begin{equation}
	\label{nectr}
	\max_{\pmb{x}\in \Xset}\mathrm{N}(f_{\pmb{x}}\leq g).
\end{equation}
There are no doubts that an optimal production plan computed according to~(\ref{nectr}) is
a robust one, since with the highest
degree of certainty  costs of the plan over scenarios 
will not exceed threshold~$g$. 
By (\ref{ndxi}), it is easy to check that problem~(\ref{nectr}) is equivalent to the following 
mathematical programming problem:
\begin{equation}
 \begin{array}{ll}
 \min &\lambda\\
 \text{s.t.} & f^{+[\lambda]}_{\pmb{x}}\leq 
   g,\\
       &\lambda\in [0,1],\\
       & \pmb{x}\in \Xset.
 \end{array}
 \label{nectrm}
\end{equation}
If $\lambda^{*}$ is the optimal objective and $\pmb{x}^{*}$ is an optimal solution for
problem~(\ref{nectrm}) 
then $\mathrm{N}(f_{\pmb{x}^{*}}\leq g)=1-\lambda^{*}$
and $\pmb{x}^{*}$ is an optimal production plan for~(\ref{nectr}).
 If~(\ref{nectrm}) is
infeasible then 
$\mathrm{N}(f_{\pmb{x}}\leq g)=0$ for all $\pmb{x}\in \Xset$.

We now present a more general criterion of choosing a robust production plan than~(\ref{nectr}). Namely,
suppose that a decision maker knows her/his preferences about  
a cost of a production plan
$f_{\pmb{x}}$ and expresses it by
a \emph{fuzzy goal}~$\widetilde{G}$, which is a fuzzy interval with a bounded support
and a nonincreasing  upper 
semicontinuous membership function~$\mu_{\widetilde{G}}:\Rset\rightarrow [0,1]$
such that $\mu_{\widetilde{G}}(v)=1$
for $v\in [0,g]$.
The value of $\mu_{\widetilde{G}}(f_{\pmb{x}})$ is  the extent to which
cost $f_{\pmb{x}}$ of $\pmb{x}$ satisfies the decision maker.
Now the requirement ``$f_{\pmb{x}}\leq g$"  is replaced softer one, i.e. ``$f_{\pmb{x}}\in \widetilde{G}$''. 
So, by~(\ref{dnec}) and (\ref{dfc})
the necessity that event ``$f_{\pmb{x}}\in \widetilde{G}$'' holds can be expressed  as follows:
\begin{eqnarray}
\mathrm{N}(f_{\pmb{x}}\in \widetilde{G})
&=&1-\Pi(f_{\pmb{x}}\not\in \widetilde{G})\label{ndg}\\
&=&
1-\sup_{v\in \Rset}\min\{\pi_{f_{\pmb{x}}}(v),1-\mu_{\widetilde{G}}(v)\}\nonumber\\
&=&1-\sup_{S}\min\{\pi(S),1-\mu_{\widetilde{G}}(F(\pmb{x},S))\}.\nonumber
\end{eqnarray}
Thus, if $\mathrm{N}(f_{\pmb{x}}\in \widetilde{G})=1-\lambda$
means that for all scenarios~$S$
such that $\pi(S)>\lambda$, 
the degree that costs of plan~$\pmb{x}$ fall within 
fuzzy goal~$\widetilde{G}$,
is not less than $1-\lambda$. Note that $\mathrm{N}(f_{\pmb{x}}\in \widetilde{G})$
is more general  and weaker than $\mathrm{N}(f_{\pmb{x}}\leq g)$. If 
 $\mu_{\widetilde{G}}(v)=0$ for $v>g$ then they are the same. Moreover,
  $\mathrm{N}(f_{\pmb{x}}\leq g)\leq \mathrm{N}(f_{\pmb{x}}\in \widetilde{G})$.
 
Let us give the second criterion of choosing a robust plan.
We are given a fuzzy goal $\widetilde{G}$, and we wish to find a production plan
which maximizes the necessity degree that costs of the plan fall within 
fuzzy goal~$\widetilde{G}$. Thus we need to solve the following optimization problem,
\begin{equation}
	\label{nectrg}
	\max_{\pmb{x}\in \Xset}\mathrm{N}(f_{\pmb{x}}\in \widetilde{G}).
\end{equation}
We check at once that problem~(\ref{nectrg}) 
 is equivalent to the following 
mathematical programming problem, which is from (\ref{ndxi}) and (\ref{ndg}):
\begin{equation}
 \begin{array}{ll}
 \min &\lambda\\
 \text{s.t.} & f^{+[\lambda]}_{\pmb{x}}\leq 
   g^{+[1-\lambda]},\\
       &\lambda\in [0,1],\\
       & \pmb{x}\in \Xset.
 \end{array}
 \label{nectrmg}
\end{equation}
If $(\pmb{x}^{*},\lambda^{*})$ is an optimal solution
for problem~(\ref{nectrmg}), then $\mathrm{N}(f_{\pmb{x}^{*}}\in \widetilde{G})=1-\lambda^{*}$.
 If~(\ref{nectrmg}) is
infeasible then 
$\mathrm{N}(f_{\pmb{x}}\in \widetilde{G})=0$ for all $\pmb{x}\in \Xset$. 

An algorithm for solving problem~(\ref{nectrmg}) (resp. (\ref{nectrm}))
is based on the standard  binary search 
technique in $[0,1]$ (the interval of possible values of $\lambda$)
 which follows from the fact that 
$f^{+[\lambda]}_{\pmb{x}}$ is nonincreasing and
 $g^{+[1-\lambda]}$ (resp.~$g$) is nondecreasing function of $\lambda$. We call
 the algorithm the \emph{binary search based algorithm}.
To find an optimal  $(x^{*},\lambda^{*})$, $x^{*}\in\Xset$, $\lambda^{*}\in [0,1]$, with a given 
error tolerance~$\xi>0$, we seek
at each iteration, for a fixed $\lambda$,
a plan $\pmb{x}\in \Xset$ satisfying 
$f^{+[\lambda]}_{\pmb{x}}\leq g^{+[1-\lambda]}$ (resp.
$f^{+[\lambda]}_{\pmb{x}}\leq g$), which boils down to 
seeking an optimal robust production plan $\pmb{x}^{r}$ and its worst case scenario
under scenario set~$\widetilde{\Gamma}^{[\lambda]}$,
i.e. to solving problem~\textsc{ROB} (see Section~\ref{srob}). 
Note that $f^{+[\lambda]}_{\pmb{x}}=A(\pmb{x})$
and also that 
$f^{+[\lambda]}_{\pmb{x}}\leq g^{+[1-\lambda]}$ (resp.
$f^{+[\lambda]}_{\pmb{x}}\leq g$) holds for some $\pmb{x}\in \Xset$
 if and only if it holds for an optimal robust production plan under $\widetilde{\Gamma}^{[\lambda]}$.
Thus, at each iteration, we use  either formulae~(\ref{adopuc}) for the case without capacity limits or
Algorithm~\ref{aorp} for the case with capacity limits.
If the length of determined interval of possible values of~$\lambda$ is less than $\xi$, then
an optimal robust production plan $\pmb{x}^{r}$ for a fixed $\lambda$, $(\pmb{x}^{r},\lambda)$,  is
an approximation of
an optimal solution for
problem~(\ref{nectrmg}) (resp. (\ref{nectrm})) with precision~$\xi$.
The running time of the above algorithm is
$O(I(T)\log\xi^{-1})$ time, where $\xi>0$ is a given
error tolerance and $I(|T|)$ is  time required 
for finding 
an optimal robust production plan  and its worst case scenario
under interval scenario set~$\widetilde{\Gamma}^{[\lambda]}$ 
(the running time of either~(\ref{adopuc}) or Algorithm~\ref{aorp}).

We now show, by the following illustrative example, that determining a production
plan maximizing  the necessity degree that costs of the plan fall within 
fuzzy goal~$\widetilde{G}$ (problem~(\ref{nectrg})) is a robust one under uncertain
demands modeled by fuzzy intervals.
We are given $5$ periods with 
the production capacity limits on~a production plan:
$l_1=40$, $u_1=50$,  $l_2=30$, $u_2=40$,
$l_3=30$, $u_3=40$, $l_4=10$, $u_4=35$ and $l_5=10$, $u_5=35$
and the same 
the costs of carrying one unit
of inventory from period~$t$ to period~$t+1$, $c^I_t=1$,
$t=1,\ldots,5$, and the same 
the costs of
backordering one unit from period $t+1$ to period~$t$ ,$c^B_t=5$, $t=1,\ldots,5$.
The demand uncertainty in each period is represented by the 
triangular fuzzy intervals: $\widetilde{D}_1=(30,37.5,45)$, 
 $\widetilde{D}_2=(5,10,15)$,  $\widetilde{D}_3=(10,20,30)$, 
  $\widetilde{D}_4=(20,30,40)$ and  $\widetilde{D}_5=(20,30,40)$,
  regarded as
possibility distributions
for the values of the unknown demands.
The fuzzy set of scenarios has the membership function:
$\mu_{\widetilde{\Gamma}}(S)=\pi(S)=\min_{t=1,\ldots,5}\mu_{\widetilde{D}_t}(s_t)$,
 $S\in\Rset^5$. The fuzzy goal $\widetilde{G}$ is   trapezoidal  fuzzy interval 
$\widetilde{G}=(0,0, 195.83,215.42)$,
where  $195.833$ is the maximal cost of an optimal robust production 
plan for the problem~\text{ROB} without capacity limits 
(an ideal supplier)
and under  the supports (the interval demands)
of the fuzzy demands, i.e. $\widetilde{D}^{[0]}_t$, $t=1,\ldots 5$.
Thus, a production plan with the cost less than~$195.833$ is totally accepted and
with the cost greater than~$215.42$ is not at all accepted.
The binary search based algorithm outputs 
a production plan (with ~$\xi=0.01$ and~$\epsilon=0.0001$ for Algorithm~\ref{aorp}):
$x^{\text{opt}}_1=40$, $x^{\text{opt}}_2=30$,  $x^{\text{opt}}_3=30$,
$x^{\text{opt}}_4=25.3776$,  $x^{\text{opt}}_5=10$,
that maximizes the necessity degree that costs of the plan fall within 
fuzzy goal~$\widetilde{G}$, $\mathrm{N}(f_{\pmb{x}^{\text{opt}}}\in \widetilde{G})=1-\lambda=0.883$.
This means that
 for all scenarios~$S$ whose possibility of occurrence is greater than~$0.117$,
 $\pi(S)>\lambda=0.117$, 
the  degree of necessity (the degree of certainty) that total  costs of plan~$\pmb{x}^{\text{opt}}$ fall within 
fuzzy goal~$\widetilde{G}$, is not less than~$0.883$. Furthermore, we are sure that
the total costs of the plan
do not exceed~$196.986$ (the total cost  at $\lambda$-cut equal to~$0.117$)
for every scenario~$S$ such that~$\pi(S)>0.117$.
We now apply existing approaches to our example.
We first consider methods based on a defuzzification which take into account 
only one scenario resulting  from a defuzzification of fuzzy parameters (demands)
-- see, e.g.,~\cite{PMPV09,PMJB10}.
Applying, for instance, the index of Yager~\cite{Y81},
we get crisp demands: 
 $d_1(S^{\text{Y}})=37.5$, 
$d_2(S^{\text{Y}})=10$, $d_3(S^{\text{Y}})=20$, $d_4(S^{\text{Y}})=30$,
 $d_5(S^{\text{Y}})=30$. An algorithm for the crisp dynamic lot-size problem with
 these demands (see Section~\ref{sdlsp}) returns an optimal production
 plan:
 $x^{\text{Y}}_1=40$, $x^{\text{Y}}_2=30$,  $x^{\text{Y}}_3=30$,
$x^{\text{Y}}_4=10$,  $x^{\text{Y}}_5=17.5$ with the total cost of~$70$.
However,  the cost of $\pmb{x}^{\text{Y}}$  may be even $313.262$ for scenarios~$S$
such that~$\pi(S)>0.117$, and so $313.262\gg 196.986$.
The necessity degree that costs of $\pmb{x}^{\text{Y}}$ fall within 
$\widetilde{G}$ equals 0.593, $\mathrm{N}(f_{\pmb{x}^{\text{Y}}}\in \widetilde{G})=0.593$,
which gives
$\mathrm{N}(f_{\pmb{x}^{\text{Y}}}\in \widetilde{G})<\mathrm{N}(f_{\pmb{x}^{\text{opt}}}\in \widetilde{G})$.
Let us examine 
a possibilistic programming  (a~mathematical programming with fuzzy parameters),
where  solution concepts are based on the Bellman-Zadeh approach~\cite{BZ70}
- see, e.g.,~\cite{MPP10,TRGZ07}. In this way,
the assertion of the form ``$f_{\pmb{x}}\in \widetilde{G}$'', 
where $f_{\pmb{x}}$ is a cost of production plan~$\pmb{x}$, 
is treated as a fuzzy constraint and  values of the membership function~$\mu_{\widetilde{G}}$
stand for
degrees of satisfaction of this constraint - the fuzzy goal. In other words, the assertion induces 
$\mathrm{\Pi}(f_{\pmb{x}}\in \widetilde{G})$.
The joint possibility distribution generated  by the fuzzy goal as well as constraints with fuzzy demands
is the minimum of the possibility of satisfaction of the fuzzy goal and the possibility of
feasibility of the constraints and thus an optimal production plan is a plan, denoted by $\pmb{x}^{\text{BZ}}$,
that maximizes the possibility degree of satisfaction both  goal and the constraints.
A trivial verification shows that for    production
plan~$\pmb{x}^{\text{Y}}$  with  the total cost of~$70$,
the possibility degree of satisfaction the goal as well as the constraints
is equal to~1  in our fuzzy problem under consideration -
in problem~(\ref{ddls}) with triangular fuzzy demands~$\widetilde{D}_t$,
 $t=1,\ldots,5$,  and so $\pmb{x}^{\text{BZ}}=\pmb{x}^{\text{Y}}$.
 However, as we have seen above, this plan is not a robust one.
 The summary of the data and results of the example is given in Table~\ref{tabsum2}.

 \begin{table}
\begin{small}
\setlength{\tabcolsep}{4pt}
\caption{Summary of the data  and results of the illustrative example} \label{tabsum2}
\begin{center}
  \begin{tabular}{cccc ||    c  |    c  | c  }
            &\multicolumn{3}{c||}{data$^{\dag}$}&\multicolumn{3}{c}{results}\\
    \cline{2-7}
           &\multicolumn{2}{c}{capacity}&interval&\multicolumn{1}{c|}{robust}&\multicolumn{1}{c|}{plan} 
           &\multicolumn{1}{c}{plan}\\
           &\multicolumn{2}{c}{limits}&demands&\multicolumn{1}{c|}{plan}
           &\multicolumn{1}{c|}{(index of Yager)} &\multicolumn{1}{c}{(Bellman-Zadeh)}\\
     \hline      
    $t$ &$l_t$  & $u_t$ & $ \widetilde{D}_t$ 
    & $x^{\text{opt}}_t$& $x^{\text{Y}}_t$  
    & $x^{\text{BZ}}_t$  \\ 
    \hline
    1 & 40 & 50 &(30,37.5,45)&  40&40 & 40\\ 
    2 & 30 & 40 & (5,10,15) &30 & 30&30\\ 
    3 & 30 & 40 & (10,20,30)&  30 & 30&30\\ 
    4 & 10 & 35 & (20,30,40) & 25.3776 & 10& 10\\ 
    5 & 10 & 35 & (20,30,40) &  10 & 17.5&17.5\\ 
    \hline
    \multicolumn{4}{l ||}{\footnotesize{$^{\dag}c^I_t=1, c^B_t=5, t=1,\ldots,5$}}&
    \multicolumn{3}{c}{the worst costs under $S$ such that $\pi(S)>0.117$ for}\\
    \cline{5-7}
    \multicolumn{4}{c||}{$\widetilde{G}=(0,0, 195.83,215.42)$}&\multicolumn{1}{c|}{$\pmb{x}^{\text{opt}}$}
     &\multicolumn{1}{c|}{$\pmb{x}^{\text{Y}}$}&
     \multicolumn{1}{c}{$\pmb{x}^{\text{BZ}}$}\\
      \cline{5-7}
     \multicolumn{4}{c||}{}&\multicolumn{1}{c|}{196.986}
     &\multicolumn{1}{c|}{313.262}&
     \multicolumn{1}{c}{313.262}\\  
      \cline{5-7}
         \multicolumn{4}{l ||}{}&
    \multicolumn{3}{c}{}\\
    \cline{5-7}
    \multicolumn{4}{c||}{}&\multicolumn{1}{c|}{$\mathrm{N}(f_{\pmb{x}^{\text{opt}}}\in \widetilde{G})$}
     &\multicolumn{1}{c|}{$\mathrm{N}(f_{\pmb{x}^{\text{Y}}}\in \widetilde{G})$}&
     \multicolumn{1}{c}{$\mathrm{N}(f_{\pmb{x}^{\text{BZ}}}\in \widetilde{G})$}\\
      \cline{5-7}
     \multicolumn{4}{c||}{}&\multicolumn{1}{c|}{0.883}
     &\multicolumn{1}{c|}{0.593}&
     \multicolumn{1}{c}{0.593}\\  
  \end{tabular}
\end{center}
\end{small}
\end{table}

In order to evaluate the efficiency of the binary search based algorithm,  we show some results of computational
experiments.
For every number of periods $T=100, 200, \ldots, 1000$, ten instances of the problem~(\ref{nectrg}) with capacity limits
were generated. In every instance, inventory costs were randomly chosen from the set $\{1, 2, \ldots, 10\}$, backorder costs were randomly chosen from the set $\{20, 21, \ldots, 50\}$, the production capacities were randomly generated intervals $[X, Y]$, where $X$ is an 
integer-valued  random variable uniformly distributed in $\{0, 1, \ldots, 99\}$ and $Y$ is an integer-valued random variable uniformly distributed in $\{100, 101, \ldots, 199\}$,
the demands are  triangular fuzzy intervals with the supports equal to $[0, 199]$ and the modal values equal 
to $Z$, where $Z$ is an integer-valued random variable uniformly distributed in $\{0, 1, \ldots, 199\}$,
the fuzzy goal $\widetilde{G}$ was modeled as a trapezoidal  fuzzy interval 
$\widetilde{G}=(0,0, c,d)$, where $c$ 
was chosen as
the maximal cost of an optimal robust production 
plan for the problem~\text{ROB} without capacity limits 
and under the interval demands being the supports 
of the generated triangular fuzzy demands.
The values of $d$ were equal to $c+\beta$, for $\beta\in
\{0.00 \cdot c, 0.25 \cdot c, 0.50 \cdot c, 0.75 \cdot c, 1.00 \cdot c\}$
and the error tolerance~$\xi=0.01$.
We used  IBM ILOG CPLEX 12.2 library  (parallel using up 2 threads)~\cite{CPLEX}
 and a
 computer equipped with Intel Core 2 Duo 2.5 GHz to solve the generated instances.
In Table~\ref{tab2new} average  computation times in seconds are presented.
As we can see from the obtained results, 
the binary search based algorithm, which calls Algorithm~\ref{aorp}
at each iteration with 
convergence tolerance parameter~$\epsilon=0.0001$, 
can solve efficiently the problem~(\ref{nectrg}),  with capacity limits, having up to~1000 periods.
\begin{table}
\begin{small}
\setlength{\tabcolsep}{4pt}
\caption{Average computation times in seconds} \label{tab2new}
\begin{center}
\begin{tabular}{r|rrrrr}
&\multicolumn{5}{c}{$\beta$}\\
 \cline{2-6}
T & $0.00 \cdot c$&  $0.25\cdot c$ & $0.50\cdot c$ & $0.75\cdot c$& $1.00\cdot c$ \\
\hline 
100 & 0.96 &  0.92 &   0.88 &   0.83 &  0.92 \\ 
200 &  2.36  &   2.29 &  2.26  &  2.11  & 2.10   \\ 
300 &  7.22 &  6.93 &  6.81 &  6.76 &  6.86 \\ 
400 &  14.68 &  14.67 &  14.55 &  14.57 & 14.55  \\ 
500 &  27.51 &  27.06 &  26.60 & 26.69  & 26.64  \\ 
600 &  46.73 &  46.85 &  46.72 &  46.38 &  46.49 \\ \
700 &  111.72 &  109.70 & 109.41  & 109.71  & 109.43  \\ 
800 &  223.22 & 217.24  & 217.89  &  217.65 & 217.85 \\ 
900 &  243.51 &  243.28 &  243.87 &  244.75 &  245.21 \\ 
1000 &  417.78 &  416.86 &  416.14 &  414.14 & 415.80  \\
\end{tabular}
\end{center}
\end{small}
\end{table}

\section{Conclusion}
In this paper, we have proposed  methods to compute a robust procurement plan in the collaborative supply chain, where the customer uses a version of MRP with  ill-known  demands
to  plan a production.
This problem is   a certain  version of 
the lot sizing problem
with  ill-known  demands modeled by
fuzzy intervals, whose membership functions 
are   regarded as possibility distributions for the values of the unknown demands.
We have introduced,  in this setting,
the degrees of possibility and
necessity that   the cost of a plan does   not exceed a given threshold and
a degree of necessity
that costs of a plan fall within 
a given fuzzy goal, which allows us to evaluate a given production plan. Moreover, we have provided
methods for computing these degrees.
For finding robust production plans under fuzzy demands,
we have proposed two criteria:
the first one consists in choosing
 a production plan
which maximizes 
the degree of necessity  that its cost does not exceed  a given threshold,
the second criterion is softer than the first one and consists  in choosing a plan with
 the maximum degree of necessity
that costs of the plan fall within 
a given fuzzy goal. 
We have constructed the algorithms for determining optimal robust production plans 
 with respect to the criteria and confirmed their efficiency experimentally.
The criteria are
a generalization, to the fuzzy case, of the
known from literature the min-max criterion.
Consequently, we have shown in the paper that there exists a link between  
interval uncertainty with the min-max criterion and  possibilistic uncertainty 
with the necessity based criteria.
It turns out that 
the  evaluation of a production plan  and choosing a plan in the fuzzy-valued 
problem are not harder than in the interval-valued case. 
The  difficulty of solving the fuzzy problems lies in the interval case, since it is reduced to
 solving a small number of interval problems.
Therefore,
we have discussed first the interval-valued case.
In this case, we 
have considered the problem of determining the optimal interval of possible costs of
a production plan, which allowed us to evaluate  the plan.
Determining the optimal bounds  of the interval
 boils down to  computing optimistic and worst case scenarios.
We have proposed  linear programing based method for computing an optimistic 
scenario and mixed integer programing and dynamic programming methods for
computing a worst case scenario. We have also identified a polynomial solvable case.
For computing an optimal robust production plan, we have provided 
a polynomial algorithm and iterative one for the cases: with no capacity limits and with
capacity limits, respectively.
Then
 we have extended  the methods
 introduced for the interval-valued problem
to the fuzzy-valued one. 

There is still  an open question concerning 
the complexity status of computing a worst case scenario
of a given production plan.
The problem
 is pseudopolynomially  solvable and polynomially solvable under certain assumptions and
seems to be
a core  of  most of the problems considered in the paper.
These assumptions are nearly realistic and  make possible 
extension of our approach to 
the case where a
procurement plan is given for a family of product.
In other words, when the sum of quantities
procured has to respect  supplier capacity constraints  which are computed from a previous
procurement plan. This problem  is equivalent to the multi-item capacitated lot sizing problem.
The fact that the complexity status is still open
 creates the possibility
 to find a polynomial algorithm
and to  extend our approach to the multi-item, multi-level capacitated lot sizing problem without
the assumptions.
So, it is 
an interesting topic of further
research.

\appendix

\section{Appendix}
\label{dod}

\begin{proof}[Proposition~\ref{pwcs}] 
It is easily seen that 
$\hat{\mathbf{X}}_t \in  [\mathbf{D}_t(S^{-}),\mathbf{D}_t(S^{+})]$,
$c^{I}(\hat{\mathbf{X}}_t-\mathbf{D}_t(S^{-}))=
c^{B}(\mathbf{D}_t(S^{+})-\hat{\mathbf{X}}_t)$, $t=1,\ldots,T$, and $\hat{x}_1\geq 0$.
Since $\mathbf{D}_{t-1}(S^{-})\leq \mathbf{D}_{t}(S^{-})$ and 
$\mathbf{D}_{t-1}(S^{+})\leq \mathbf{D}_{t}(S^{+})$, $t\geq 2$,
 (\ref{adopuc})
shows that $\hat{x}_t\geq 0$, $t\geq 2$.
Hence
$L_t(\hat{\mathbf{X}}_t,\mathbf{D}_t(S^{-}))=c^{I}(\hat{\mathbf{X}}_t-\mathbf{D}_t(S^{-}))$
and 
$L_t(\hat{\mathbf{X}}_t,\mathbf{D}_t(S^{+}))=c^{B}(\mathbf{D}_t(S^{+})-\hat{\mathbf{X}}_t)$, 
$t=1,\ldots,T$, and so $F(\hat{\pmb{x}},S^{-})=F(\hat{\pmb{x}},S^{+})$.
Let $S$ be any scenario, $S\in\Gamma$.  
Therefore,
$F(\hat{\pmb{x}},S)=\sum_{t=1}^{T}\max\{c^{I}(\hat{\mathbf{X}}_t-\mathbf{D}_t(S)),
c^{B}(\mathbf{D}_t(S)-\hat{\mathbf{X}}_t)\}
\leq\sum_{t=1}^{T}\max\{c^{I}(\hat{\mathbf{X}}_t-\mathbf{D}_t(S^{-})),
c^{B}(\mathbf{D}_t(S^{+})-\hat{\mathbf{X}}_t)\}
=\sum_{t=1}^{T}L_t(\hat{\mathbf{X}}_t,\mathbf{D}_t(S^{-}))=
\sum_{t=1}^{T}L_t(\hat{\mathbf{X}}_t,\mathbf{D}_t(S^{+}))\\
=F(\hat{\pmb{x}},S^{-})=F(\hat{\pmb{x}},S^{+})$.
\end{proof}

\begin{proof}[Theorem~\ref{tncrob}]
We show that $A(\pmb{x})\geq A(\hat{\pmb{x}})$ for every $\pmb{x}\in \Xset$.
Consider any $\pmb{x}^{'}\in \Xset$. Let us modify $\pmb{x}^{'}$ in the following way:
\[
\mathbf{X}^{''}_t=
\begin{cases}
	\mathbf{D}_t(S^{-}) &\text{if $\mathbf{X}^{'}_t<\mathbf{D}_t(S^{-})$,}\\
	\mathbf{X}^{'}_t &\text{if $\mathbf{D}_t(S^{-})\leq \mathbf{X}^{'}_t \leq \mathbf{D}_t(S^{+})$, $t=1,\ldots, T$,}\\
	\mathbf{D}_t(S^{+}) &\text{if $\mathbf{X}^{'}_t>\mathbf{D}_t(S^{+})$.}
\end{cases}
\]
Now $\mathbf{X}^{''}_t \in  [\mathbf{D}_t(S^{-}),\mathbf{D}_t(S^{+})]$.
From the feasibility of $\pmb{x}^{'}$, it follows that 
 $\mathbf{X}^{'}_{t-1} \leq  \mathbf{X}^{'}_t$, $t\geq 2$.
 Hence and  
$\mathbf{D}_{t-1}(S^{-})\leq \mathbf{D}_{t}(S^{-})$ and 
$\mathbf{D}_{t-1}(S^{+})\leq \mathbf{D}_{t}(S^{+})$, $t\geq 2$,
we obtain $\mathbf{X}^{''}_{t-1} \leq  \mathbf{X}^{''}_t$, $t\geq 2$ and, in
consequence $\pmb{x}^{''}\in \Xset$. Furthermore, it is easy to check that
$A(\pmb{x}^{'})\geq A(\pmb{x}^{''})$. By the definition of $A(\pmb{x}^{''})$, we have
$A(\pmb{x}^{''})\geq\max\{F(\pmb{x}^{''},S^{-}), F(\pmb{x}^{''},S^{+})\}$.
We now only need to show that $\max\{F(\pmb{x}^{''},S^{-}), F(\pmb{x}^{''},S^{+})\}\geq 
F(\hat{\pmb{x}},S^{-})=F(\hat{\pmb{x}},S^{+})$.

Let us focus on a production plan~$\hat{\pmb{x}}$ (determined by formulae~(\ref{adopuc})).
Notice that function $F(\pmb{x},S^{-})=\sum_{t=1}^{T}c^{I}(\mathbf{X}_t-\mathbf{D}_t(S^{-}))$
(resp. $F(\pmb{x},S^{+})=\sum_{t=1}^{T}c^{B}(\mathbf{D}_t(S^{+})-\mathbf{X}_t)$)
are the sum of  linear and increasing  (resp. decreasing) functions 
 with respect to $\mathbf{X}_t$,  $c^{I}(\mathbf{X}_t-\mathbf{D}_t(S^{-}))\geq 0$
 (resp. $c^{B}(\mathbf{D}_t(S^{+})-\mathbf{X}_t)\geq 0$) for
 $\mathbf{X}_t \in  [\mathbf{D}_t(S^{-}),\mathbf{D}_t(S^{+})]$, $t=1,\ldots,T$.
 Hence, for each $t=1,\ldots,T$ there exists the intersection point in interval
 $[\mathbf{D}_t(S^{-}),\mathbf{D}_t(S^{+})]$.
 It is easy to check that 
$\hat{\mathbf{X}}_t \in  [\mathbf{D}_t(S^{-}),\mathbf{D}_t(S^{+})]$ and
$c^{I}(\hat{\mathbf{X}}_t-\mathbf{D}_t(S^{-}))=
c^{B}(\mathbf{D}_t(S^{+})-\hat{\mathbf{X}}_t)$, $t=1,\ldots,T$.
Therefore, the points $\hat{\mathbf{X}}_t$,  $t=1,\ldots,T$,
 are the intersection ones and
$\max\{F(\pmb{x}^{''},S^{-}), F(\pmb{x}^{''},S^{+})\}\geq 
F(\hat{\pmb{x}},S^{-})=F(\hat{\pmb{x}},S^{+})$. Proposition~\ref{pwcs}
shows that $F(\hat{\pmb{x}},S^{-})=F(\hat{\pmb{x}},S^{+})=A(\hat{\pmb{x}})$ and we thus get
$A(\pmb{x}^{'})\geq A(\pmb{x}^{''})\geq A(\hat{\pmb{x}})$.
 \end{proof}

\begin{proof}[Theorem~\ref{tconv}]
The proof is almost the same as those given in~\cite[Theorem~2.5]{G72} and \cite[Theorem~3]{SA80}.
We will denote by $\{(\pmb{x}^{k},a^{k})\}$ the sequence of optimal solution~$(\pmb{x}^{*},a^{*})$
computed in consecutive  iterations in Step~6 of Algorithm~\ref{aorp},
 $k$ stands for the  $k$-th iteration.
 By picking a subsequence, if necessary, the sequence $\{(\pmb{x}^{k},a^{k})\}$
converges to point $(\hat{\pmb{x}},\hat{a})$, $\hat{\pmb{x}}\in \Xset$, which 
follows from the fact that sequence~$\{\pmb{x}^{k}\}$ belong to the bounded and closet
 set~$\Xset \subseteq \Rset^{T}_{+}$ ($\Xset$ is a compact set) and
$\{a^{k}\}$ is a nondecreasing sequence bounded above. Similar considerations apply to the 
sequence $\{S^k\}$ of worst case scenarios determined in Step~2 of Algorithm~\ref{aorp}.
 The set $\Gamma$ is
a closed and bounded (compact) and hence  $\{S^k\}$  converges to $\hat{S}\in \Gamma$.
Since scenario cuts are appended to  problem~\textsc{RX-ROB}, the inequality 
$a^{k+1}\geq F(\pmb{x}^{k+1},S^{k})$ holds. By continuity of $F$,
we have
\begin{equation}
\hat{a}\geq F(\hat{\pmb{x}},\hat{S}).
\label{in1}
\end{equation}
Let us define the set $\mathcal{S}^{w}(\pmb{x})$ of worst case scenarios for $\pmb{x}\in X$, i.e.
$\mathcal{S}^{w}(\pmb{x})=\{S^{w}\,|\,S^{w}=\text{arg }\max_{S\in \Gamma}F(\pmb{x},S)\}$, which is
the point-to-set mapping. The set $\mathcal{S}^{w}(\pmb{x})$ is nonempty for every for $\pmb{x}\in X$.
By \cite[Theorem~1.5]{M70}, $\mathcal{S}^{w}$ is upper semicontinuous at $\hat{\pmb{x}}$
and so $\hat{S}\in \mathcal{S}^{w}(\hat{\pmb{x}})$. Therefore 
\begin{equation}
   A(\hat{\pmb{x}})=\max_{S\in \Gamma} F(\hat{\pmb{x}},S)=F(\hat{\pmb{x}},\hat{S}).
   \label{in2}
\end{equation}
Combining~(\ref{in1}) and (\ref{in2}), we obtain $\hat{a}\geq A(\hat{\pmb{x}})$.
By \cite[Lemma~1.2]{M70} $A$ is upper semicontinuous, which  yields
\[
A(\pmb{x}^{k})= F(\pmb{x}^{k},S^{k}) \leq a^{k}+\epsilon, \text{ for some sufficiently large } k.
\]
This implies that the termination criterion in Step~3 of Algorithm~\ref{aorp} will be  fulfilled in a finite number of iterations.
\end{proof}


\end{document}